\newtheorem{theorem}{Theorem}
\newtheorem{lemma}{Lemma}
\newtheorem{assumption}{Assumption}
\newtheorem{definition}{Definition}
\def\@seccntformat#1{\@ifundefined{#1@cntformat}%
   {\csname the#1\endcsname\space}%    default
   {\csname #1@cntformat\endcsname}}%  enable individual control
\newcommand\section@cntformat{\thesection.\space} % section-level
\newcolumntype{Y}{>{\centering\arraybackslash}X}
\def\sym#1{\ifmmode^{#1}\else\(^{#1}\)\fi}
\newcommand{\cmark}{\ding{51}}
\newcommand{\xmark}{\ding{55}}
\begin{document}

	\title{\textbf{Building Trust Takes Time: \\ {\Large Limits to Arbitrage for Blockchain-Based Assets}}\thanks{Hautsch is at University of Vienna -- Department of Statistics and Operations Research, Research Platform Data Science @ Uni Vienna, Vienna Graduate School of Finance (VGSF) and Center for Financial Studies (CFS). Scheuch is at wikifolio Financial Technologies AG, and Voigt is at the University of Copenhagen -- Department of Economics. Voigt gratefully acknowledges support from the Danish Finance Institute (DFI). Corresponding author: \href{mailto:stefan.voigt@econ.ku.dk}{stefan.voigt@econ.ku.dk}. We thank Christine Parlour, one anonymous referee, Bruno Biais, Sylvia Fr\"uhwirth-Schnatter, Sergey Ivliev, David Easley, Katya Malinova, Fahad Saleh, Peter Zimmerman, Viktor Todorov, Majeed Simaan, Jun Aoyagi as well as seminar participants at QFFE 2018, the $1^{st}$ International Conference on Data Science in Finance with R, the $4^{th}$ Konstanz-Lancaster Workshop on Finance and Econometrics, the Crypto Valley Blockchain Conference 2018, HFFE 2018, CFE 2018, CUNEF, University of Heidelberg, University of Vienna, University of Graz, the $2^{nd}$ Toronto FinTech Conference, the $4^{th}$ Vienna Workshop on High-Dimensional Time Series 2019, the Conference on Market Microstructure and High Frequency Data 2019, the 2019 FIRS Conference, the $12^{th}$ Annual SoFiE Conference, the IMS at the National University of Singapore, the $3^{rd}$ SAFE Microstructure Conference, the 2019 EFA Annual Meeting, the 2019 Vienna Congress on Mathematical Finance, the International Conference on Fintech \& Financial Data Science 2019, the $4^{th}$ International Workshop in Financial Econometrics, the 2019 CFM-Imperial Workshop, the Western Finance Association Conference 2020, the NHH Young Scholars Finance Webinar series, the European Blockchain Center, the Conference on Financial Innovation at Stevens Institute for Technology, the 4th UWA Blockchain and Cryptocurrency Conference and at U.S. Securities and Exchange Commission for helpful comments and suggestions. This paper replaces an earlier draft titled "Limits to Arbitrage in Markets with Stochastic Settlement Latency."}
	}
	\author{\begin{tabularx}{1\textwidth}{YYY}
			  Nikolaus Hautsch & Christoph Scheuch & Stefan Voigt  
	\end{tabularx}
	}

\date{\today \vspace{-2.5em}}

	\maketitle
        \thispagestyle{empty}
	\renewcommand{\abstractname}{\vspace{-\baselineskip}}
	\begin{abstract}
	\begin{spacing}{1}
          \noindent 
A blockchain replaces central counterparties with time-consuming consensus protocols to record the transfer of ownership. 
This settlement latency slows cross-exchange trading, exposing arbitrageurs to price risk. 
Off-chain settlement, instead, exposes arbitrageurs to costly default risk. 
We show with Bitcoin network and order book data that cross-exchange price differences coincide with periods of high settlement latency, asset flows chase arbitrage opportunities, and price differences across exchanges with low default risk are smaller.
Blockchain-based trading thus faces a dilemma: Reliable consensus protocols require time-consuming settlement latency, leading to arbitrage limits. Circumventing such arbitrage costs is possible \emph{only} by reinstalling trusted intermediation, which mitigates default risk.
        \end{spacing}
  
		\vspace{12pt}
    	\noindent\textbf{JEL Codes:} G00, G10, G14\\
		\noindent\textbf{Keywords:} Arbitrage, Blockchain, Market Frictions
	\end{abstract}

\newpage
\setcounter{page}{1}	

\section{Introduction}\label{sec:introduction}

Whenever two investors seek to agree on a financial contract, counterparty risk harms trading if either party may default on its contractual obligations. 
To guarantee the execution of the negotiated terms, traditional stock markets organize the trading process around trusted intermediaries. 
Typically, central clearing counterparties bear all counterparty risk between transacting parties from the trade agreement until the legal transfer of ownership through security depositories. Market participants pay for the implied insurance against counterparty risk through fees and collateral deposits.

By contrast, blockchain technology promises to mitigate counterparty risk and render trusted intermediaries obsolete.
In such a framework, an open network of validators establishes consensus about transaction histories. 
Consensus protocols serve as the regulatory framework and specify how validators reach an agreement and are incentivized to collaborate. 
The design of consensus protocols can take different forms to ensure a reliable record of transaction histories. 
One widely applied consensus mechanism is proof-of-work, where the validation process involves substantial computational effort and is hence costly to undermine \citep[e.g.,][]{Biais.2019}. 

The design of exchanges as facilitators of the transfer of blockchain-based assets between investors can take different forms. 
Decentralized exchanges (DEX) enable peer-to-peer trading and thus do not require trust in an intermediary. 
DEXes are essentially smart-contract-based algorithms that record every transaction directly on the blockchain and render exchanges pure matchmakers \citep[e.g.,][]{Harvey.2020, Lehar.2021}. 
While entirely mitigating counterparty risk, this market structure is not appropriate for high-frequency trading, hindering the widespread adoption of DEXes. 
For example, consider running a traditional limit order book: As each order submission requires blockchain validation, which involves a fee to validators, a large message volume renders DEXes comparably slow, inefficient, and costly.

In fact, most transactions of crypto tokens rely on fragmented, largely unregulated, and so-called centralized exchanges (CEX).\footnote{DEX trading volume is negligible relative to CEX volume. According to data provider \emph{cryptocompare}, CEXes executed 89\% of digital asset volume of approximately USD 1.04 trillion in December 2021. Serious concerns regarding the limits to DEX adoption exist. Among others, \cite{Capponi.2021} and \cite{Park.2021} illustrate that DEXes render liquidity provisioning and, subsequently, trading on private information prohibitively costly.} 
A CEX facilitates trades, typically by running a limit order book maintained internally and subsequently settling transactions off-chain. 
Off-chain settlement means that CEXes net executed transactions internally such that transactions are not validated on the blockchain, and validators need not be compensated for executed order flow. To render off-chain settlement feasible, however, CEXes need to serve as custodians of their customers' funds. 
Moreover, market fragmentation across multiple CEXes can result in violations of the law of price. 
Thus, risk and marginal costs for cross-exchange arbitrageurs determine the degree to which such violations of the law of one price can persist. 

The main result of this paper is that \emph{settlement latency} exposes cross-exchange arbitrageurs to substantial costs. Settlement latency is the waiting time until blockchain validation of a transaction. 
Settlement latency makes arbitrage trades costly since it exposes arbitrageurs to substantial price risk: 
Due to the absence of a trusted intermediary, exploiting price differences requires the transfer of a blockchain-based asset \emph{across} CEXes and, therefore, time-consuming blockchain validation. 

Consider the decision problem of a risk-averse cross-exchange arbitrageur who monitors the prices of a blockchain-based asset on two CEXes. 
Whenever she buys on one exchange that quotes a lower price (low-price exchange), she must wait until blockchain validation of the asset transfer before selling on the other exchange (high-price exchange). 
Thus, the settlement latency underlying this transfer exposes the arbitrageur to the risk of adverse price movements. 
Consequently, risk-averse arbitrageurs exploit concurrent price differences only if these price differences are sufficiently large to compensate for the price risk due to settlement latency.

We derive a closed-form expression for the arbitrageur's certainty equivalent and show that it increases with (i)  the expected settlement latency, (ii) the variance of the settlement latency, (iii)  price volatility on the high-price CEX, and (iv) the arbitrageur’s risk aversion. Our characterization of the certainty equivalent also accounts for transaction costs and optimally chosen settlement fees which incentivize validators to enable faster validation \citep[e.g.,][]{Easley.2017, Huberman.2021, Lehar.2022}.

Settlement latency can be effectively circumvented \emph{only} if the arbitrageur deposits arbitrage capital as collateral of the blockchain-based asset under the custody of CEXes. 
While capital intensive, the arbitrageur can acquire a marginal unit at the low-price exchange and immediately dispose of an offsetting amount of her inventory on the high-price exchange. 
With such inventory, settlement latency does not affect the trading decision. 
However, off-chain settlement exposes traders to exchange default risk, which manifests in the risk of thefts, hacks, or exit scams. For instance, \cite{Biais.2019b} document more than 50 hacks and other losses on Bitcoin exchanges and find that the expected returns of cryptocurrency investors reflect these risks.

As a result, the dilemma of blockchain-based settlement unfolds as follows: Settlement without trusted intermediation requires a secure validation system. However, a reliable consensus protocol relies on sufficiently high settlement latency \citep[e.g.,][]{Hinzen.2022}. 
Marginal costs for arbitrageurs on CEXes are high when settlement latency is large. Ergo:
High settlement latency increases trading costs and improves the attractiveness of investments into CEX reliability. We may thus expect that successful blockchain-based settlement implies higher efforts to improve the trust in CEXes.
As long as CEXes struggle to establish themselves as trusted intermediaries, default risk renders settlement latency an inevitable friction hampering arbitrage activity.\footnote{Establishing trust requires costly investments in security, governance, and regulatory compliance. 
Indeed, modern CEXes exert substantial effort for such trust-enhancing procedures, i.e., by increasing transparency of cryptocurrency holdings under custody, implementing funds to repay damages due to security breaches, or complying with strict regulatory standards, e.g., the SEC BitLicense. Changpeng Zhao, the CEO of Binance, the world's largest CEX, expressed this concern: ''By focusing on delivering a superior user experience in tandem with top-notch security, centralized exchanges essentially live or die based on their ability to “create trust” among their users.''} 

Our analysis yields two major empirical predictions: (i) settlement latency is a costly friction for cross-exchange arbitrageurs, and (ii) mitigating CEX default risk reduces marginal cross-exchange arbitrage costs. Our empirical analysis relies on novel, granular data to provide compelling evidence for both effects at play.
We gather minute-level data from the order books of 16 large CEXes that feature trading Bitcoin against the US Dollar (USD) between January 2018 and October 2019. 
We analyze violations of the law of one price between each exchange pair. In line with \cite{Kroeger.2017}, \citet{Choi.2018}, \cite{Makarow.2018}, and \cite{Borri.2021}  we report substantial price differences across the 120 feasible exchange pairs through our sample.
To quantify the relevance of settlement latency as a friction, we enrich our dataset with comprehensive high-frequency information about the Bitcoin network, which includes the settlement latency, i.e., the time it takes for every transaction from entering the Bitcoin network until its inclusion in the blockchain. 

In line with our theoretical framework, we find that large price differences coincide with periods of high settlement latency, high latency uncertainty, and high spot volatility. 
We find that substantial uncertainty in settlement latency due to the computational effort in the proof-of-work mechanism of Bitcoin contributes more than 40\% of the marginal arbitrage costs. 
The results are robust when we control for trading costs, optimal validator fee choices, and order book liquidity, in line with \cite{Roll.2007} and \cite{DeJong.1990}. We explicitly distinguish between settlement latency and trading-related frictions such as tight capital controls. For instance, \cite{Makarow.2018} and \citet{Choi.2018} argue that capital controls create frictions that limit the ability of arbitrageurs to take advantage of persistent price differences. In our sample, settlement latency remains relevant even if we restrict the analysis to CEXes that are active within the same region, thus eliminating the potential impact of capital controls.
We document further that these results hold even beyond other well-known frictions that hamper arbitrage activity, such as the presence of intermediation facilities, e.g., margin trading \cite[e.g.,][]{Pontiff.1996, Lamont.2003b, Lamont.2003, Jong.2009}.

To analyze the role of exchange default risk, we use the number of Bitcoins under the custody of a CEX as a proxy for trust. This proxy relies on our theoretical analysis of arbitrage capital allocation in the presence of settlement latency \emph{and} default risk: CEXes that successfully mitigate default risk should attract more arbitrage capital.
Our empirical analysis reveals a strong increase in Bitcoin holdings under the custody of CEXes during our sample period. In October 2019, CEXes held more than USD 12.4 Billion of Bitcoin, an increase of 25.8\% relative to the beginning of our sample period in January 2018.

Consistent with our theoretical predictions, cross-exchange price differences tend to be narrower between CEXes with more funds under custody. 
Simultaneously, settlement latency remains a significant driver of the magnitudes of observed cross-exchange price differences and their variation over time, even after controlling for funds under custody as a proxy for trust. Violations of the law of one price hence coincide with periods of high settlement latency, even for CEXes perceived as highly trustworthy.\footnote{We confirm that exchanges with a large number of Bitcoins under their custody are associated with higher exchange ratings as an alternative measure for trust. Data provider \emph{cryptocompare} classifies CEXes according to implemented due diligence procedures, operational transparency, security, and data provision. Exchanges with a higher rating are more likely to exhibit larger funds under their custody. Our results remain qualitatively unchanged if we measure trust via exchange ratings instead of funds under custody.}  

Finally, we provide evidence for cross-exchange asset flows chasing price differences but being hampered by settlement latency. 
We collect Bitcoin wallet IDs under the control of the CEXes in our sample and compile a unique and novel data set of 3.9 million cross-exchange transactions with an average daily volume of USD 72 million. We use an instrument for cross-exchange price differences to tackle the inherent endogeneity arising from the simultaneity between price differences and cross-exchange asset flows. For that purpose, our instrument is the theoretical minimum price difference necessary such that the arbitrageur prefers to trade. 
We estimate these \emph{arbitrage bounds} based on our granular Bitcoin order book and network data. 
Asset flows into an exchange significantly respond to variations in concurrent price differences, particularly those explained by variations in arbitrage bounds, while also controlling for settlement latency-induced price risk and exchange-specific characteristics. 

In summary, our empirical results indicate that (i) violations of the law of one price coincide with periods of high settlement latency, high settlement latency uncertainty, and spot volatility, (ii) arbitrageurs perceive settlement latency as an economically relevant friction and act in anticipation of the related price risk and (iii) trust mitigates these frictions. In this sense, our paper contributes to the literature on limits to arbitrage \cite[e.g.,][]{DeJong.1990, Shleifer.1997, Gromb.2010} by highlighting a friction that arises specifically for blockchain-based assets. 
Our results contribute to a better understanding of the economic implications of blockchain technologies for trading on financial markets. 
The promise of fast and low-cost transaction settlement leads central banks and marketplaces to actively explore potential applications of such systems for transaction settlement \citep[e.g.,][]{BIS.2017, NASDAQ.2017, ECB.2020}. 
The existing literature on blockchain-based trading mainly focuses on the incentive compatibility constraints of \emph{validators}, which limit decentralization. In particular, \cite{Brunnermeier.2018} points out a "Blockchain dilemma" in the sense that correctness, decentralization, and cost efficiency cannot simultaneously be achieved for blockchain-based trading. Along this line, \cite{Cong.2020} show that risk-averse validators have incentives to pool their mining power which can result in inefficient accumulations of mining capacities. \cite{Hinzen.2022} show that network security and fast settlement are mutually exclusive and \cite{Pagnotta.2018} finds that miners' competition for block rewards amplifies price volatility and thus has severe pricing implications.

Our focus lies on the economic frictions that originate from the time-consuming effort necessary to mitigate counterparty risk for blockchain-based assets. 
Materialization of such counterparty risk in cryptocurrency exchanges has been documented in various forms, for instance, price manipulations \citep[e.g.,][]{Gandal.2018, Li.2021} or wash trading \citep[e.g.,][]{Aloosh.2019, Amiram.2020, Cong.2021}.
We thus shed light on the important open question: to which extent can consensus protocols maintain reliable and secure validation \emph{and} minimal third-party intermediation? Our main conclusion is that settlement latency is a severe economic friction and can be bypassed only by reinstalling core components of trusted intermediation with all associated costs or risks.

Our results are also relevant from the perspective of platform spillovers for blockchain-based assets. 
For instance, \cite{Capponi.2021} shows that DEX arbitrage competition can lead to elevated settlement fees, and \cite{Sokolov.2021} finds that periods of elevated Ransomware activity increase the expected settlement latency. 
The results in our paper thus predict higher marginal arbitrage costs due to events that are isomorphic to cross-exchange trading but nevertheless resemble negative externalities on \emph{all} validation activities conducted on the blockchain, including cross-exchange trading.    

The paper proceeds as follows. In Section~\ref{sec:theory}, we derive arbitrage costs due to settlement latency and CEX default risk and characterize the resulting limits to arbitrage capital. Section~\ref{sec:methodology} provides a general theoretical framework to quantify latency-related costs in a realistic setup where the arbitrageur endogenizes transaction costs, random settlement latency, optimal trading quantities, as well as settlement fee choices. In Section~\ref{sec:data}, we introduce the Bitcoin order book and network data. In Section~\ref{sec: results}, we quantify the core components of our theoretical framework, settlement latency, and spot volatility. In Section~\ref{sec:regressions}, we provide empirical results and show that settlement latency is an economically relevant friction for blockchain-based assets. Section~\ref{sec:conclusion} concludes. 

\section{Limits to Arbitrage for Blockchain-Based Assets}\label{sec:theory}

\begin{figure}
\caption{Illustration of the cross-exchange and inventory aribtrage strategy.} \label{fig:strategies}
    \begin{center}
     \includegraphics[clip, trim=2.5cm 6cm 2cm 2cm, width=1.00\textwidth]{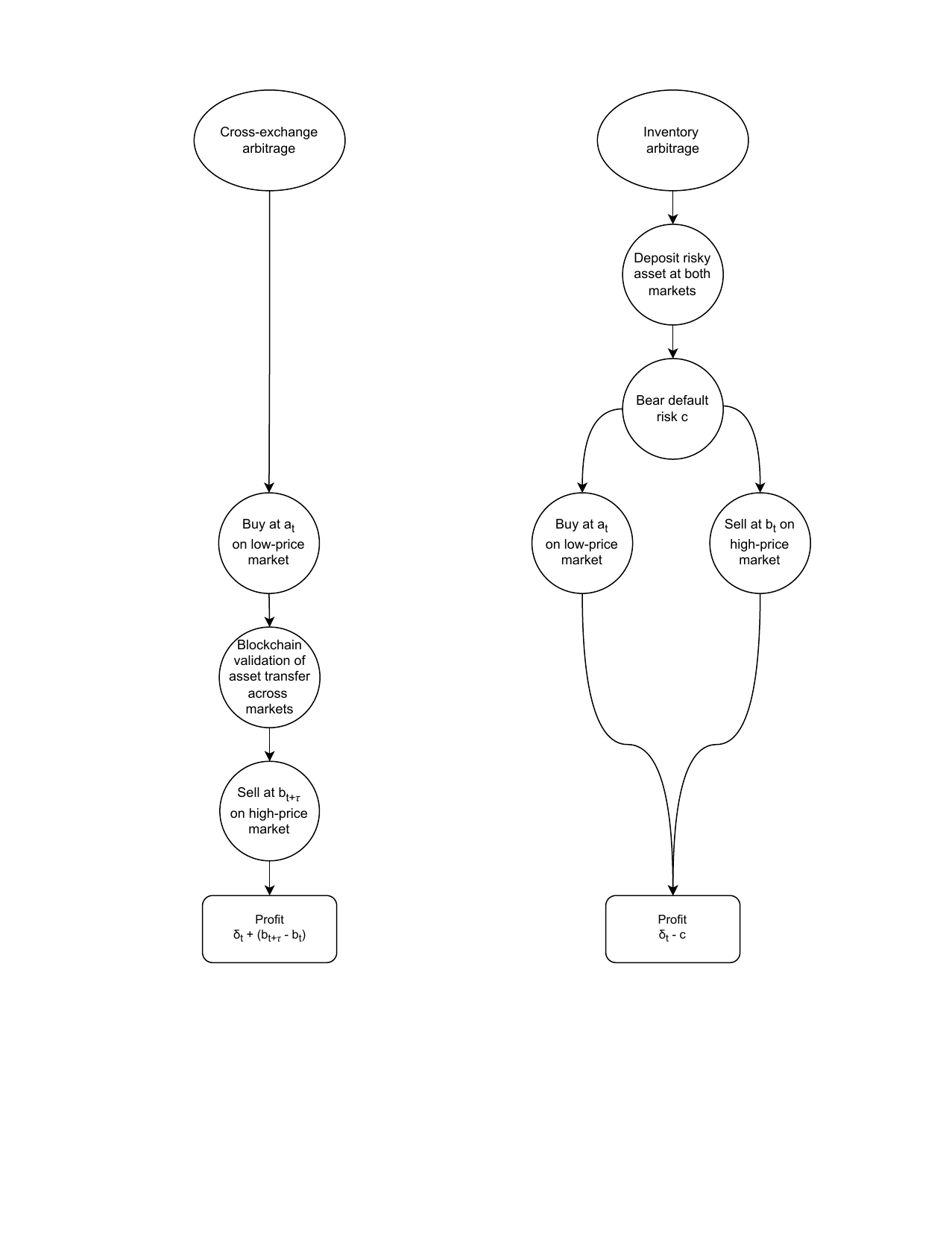}
    \end{center}
\end{figure}

We consider an economy containing a single blockchain-based asset traded on two centralized exchanges. We define an asset as blockchain-based if the securities' ownership is maintained on a shared database that can be updated without relying on trusted intermediaries or other third-party infrastructure. 
CEXes keep their order-matching systems off-chain, meaning they operate as escrows for their clients without recording transactions on the blockchain. 
Off-chain settlement has important benefits: First, no incentivizing payments to blockchain validators are needed, and second, transaction throughput can be scaled independently of the speed with which validators append new transactions to the blockchain. However, off-chain settlement is criticized as being vulnerable to massive breaches of security and unsafe storage of information, funds, and private keys.

We assume that the trading activity on both exchanges is exogenous and that agents can monitor the quotes of the asset across exchanges. 
Each exchange $k\in\{i,j\}$ continuously provides a log ask price $a^{k}_{t}$ and log bid price $b^{k}_{t}$ (with $b^{k}_{t} \leq a^{k}_{t}$) for one marginal unit of the asset at time $t$. 

Our sole agent is an arbitrageur who aims to exploit observed price differences across exchanges. 
She intends to buy a marginal unit of the asset on the exchange with a lower ask price and sell the same amount at the exchange with a higher bid price. Hence, in the case of frictionless trading, the arbitrageur exploits observed price differences when her profits are positive, i.e., whenever 
\begin{equation}
	\delta_t := \max\left\{b^i_t  - a^j_t, b^j_t  - a^i_t, 0\right\} > 0.
\end{equation} 
Cross-exchange price differences $\delta_t > 0$ can occur, for instance, as a response to private valuation shocks or asymmetric arrival of information \citep[e.g.,][]{Foucault.2017}. We assume throughout the paper that the arbitrageur cannot infer the low-price exchange before the cross-exchange price difference occurs. Upon the arrival of a positive price difference, we denote the resulting low-price exchange as $b\in\{i,j\}$ (for \emph{buy}) and the high-price exchange as $s$ (for \emph{sell}) such that $\delta_t = b^s_t  - a^b_t > 0$. 

\subsection{Cross-Exchange Arbitrage Strategies}

We distinguish between two major frictions when exploiting price differences of blockchain-based assets across CEXes: risk related to the latency in the settlement process and risk due to defaults of CEXes. 
To exploit price differences $\delta_t $ across two CEXes, the arbitrageur can allocate her arbitrage capital across two possible strategies, which we term \emph{cross-exchange arbitrage} and \emph{inventory arbitrage}. Figure~\ref{fig:strategies} illustrates the actions associated with the two arbitrage strategies.
A rational arbitrageur anticipates the implied costs of exploiting price differences and chooses the optimal arbitrage capital allocation. 

\paragraph{Cross-exchange arbitrage.}
Cross-exchange arbitrage requires storing funds of the numéraire at both CEXes. We keep the model parsimonious and assume that the numéraire can be stored without default risk under CEX custody.\footnote{Introducing such an additional source of uncertainty would not affect our results qualitatively because both cross-exchange and inventory arbitrage require deposits of the numéraire and are thus prone to the same additional source of risk. In our empirical analysis, we consider USD as safe numéraire deposits typically stored on CEX bank accounts outside the premises of vulnerable hacks.}
Deposits can be instantaneously exchanged for the blockchain-based asset on the corresponding CEXes. 
If buying on one exchange and selling on the other exchange implies a profit, the arbitrageur buys a marginal unit of the asset on the exchange with the lower ask price, transfers the asset to the exchange with a higher bid price, and sells the marginal unit as soon as the transfer is settled. 
Settlement in the context of blockchain-based assets is the validation of a transaction on the blockchain by validators, which indicates that the marginal unit of the asset has been deducted from a wallet under the control of the low-price CEX and has been credited to a wallet under the control of the desired high-price CEX. 
The absence of a clearing house for blockchain-based assets renders it impossible for the arbitrageur to dispose of her position before validation. 

We denote the settlement latency $\tau$ as the (known) waiting time until a transfer of the asset between exchanges is validated. 
The simplified setup with known $\tau$ illustrates the fundamental relationship between settlement latency and limits to arbitrage. In Section~\ref{sec:methodology}, we generalize the framework to random settlement latencies, transaction costs, optimal latency-reducing fee choice, and transaction quantities beyond marginal units. However, the main insights of the theoretical framework remain qualitatively the same.

Because the acquisition on the low-price exchange takes place at time $t$ and the transfer of the asset to the high-price exchange is settled at $t+\tau$, the arbitrageur faces the log bid price $b^{s}_{t+\tau}, s\in\{i, j\}$.
The profit of the arbitrageur's trading decision is thus at risk if the probability of losing money is non-zero. 
A risk-averse arbitrageur faces limits to (statistical) arbitrage in the spirit of \cite{Bondarenko.2003} whenever the associated risk exceeds the expected return. 
We model the random log price change on the high-price exchange from time $t$ to $t+\tau$ as a Brownian motion without drift, such that the log return of the cross-exchange arbitrage transaction is 
\begin{align}
	r_{(t:t+\tau)} := b^{s}_{t+\tau} - a_t^b = \underbrace{\delta_{t}}_{\genfrac{}{}{0pt}{}{\text{instantaneous}}{\text{return}}} + \underbrace{ b^{s}_{t+\tau} - b^{s}_t}_{\genfrac{}{}{0pt}{}{\text{exposure to }}{\text{price risk}}} = \delta_{t} + \sqrt{\tau}z,
\end{align}
where $z\sim N(0, \sigma^2)$ is normally distributed with volatility $\sigma$.
It follows that a risk-averse arbitrageur with mean-variance utility and coefficient of risk aversion $\gamma$ would only exploit price differences $\delta_t$ if her certainty equivalent ($CE$) is positive, i.e., 
\begin{align}\label{eq:simple_arbitrage_bounds}
CE = \mathbb{E}\left(	r_{(t:t+\tau)}\right) - \frac{\gamma}{2}\mathbb{V}\left(r_{(t:t+\tau)}\right) = \delta_t - \frac{\gamma}{2}\sigma^2\tau\geq 0 
\Longleftrightarrow	\delta_t \geq \frac{\gamma\sigma^2\tau}{2}.
\end{align}
Equation~\eqref{eq:simple_arbitrage_bounds} illustrates that settlement latency implies a risk-reward trade-off for risk-averse arbitrageurs. 
Whenever the observed price differences $\delta_t$ are positive, but CE is negative, the arbitrageur does not trade.\footnote{The risk aversion is associated with the arbitrageur's attitude towards the risk of a single trade. Theoretically, repeatedly exploiting price differences may lead to a vanishing variance of the arbitrageurs' aggregate returns, equivalent to a contraction of the relevant bounds. However, competition among arbitrageurs or information transmission across exchanges can imply a negative drift for the log price changes, which induces limits to arbitrage even for risk-neutral arbitrageurs \cite[e.g.,][]{Voigt.2020}.} 
 In this case, although the trade would be profitable under the possibility of instantaneous settlement, limits to arbitrage arise due to settlement latency. 
The arbitrageur requires higher expected returns if settlement latency $\tau$, the spot volatility $\sigma^2$, or risk aversion $\gamma$ increases. 

\paragraph{Inventory arbitrage.}
CEXes provide effective ways to circumvent settlement latency: To conduct inventory arbitrage, the arbitrageur deposits equal fractions of the available arbitrage capital as numéraire deposits \emph{and} collateral of the blockchain-based risky asset under the custody of each of the two exchanges. 
Note the important distinction to the numéraire when depositing the blockchain-based asset: such collaterals under the custody of exchanges can be subject to default risk.  
While capital-intensive, inventory arbitrage does not expose the arbitrageur to settlement latency. Instead, upon spotting a price difference $\delta_t > 0$, the arbitrageur acquires a marginal unit on the low-price CEX and immediately disposes of an offsetting amount of her inventory on the high-price CEX. 
As a result, her aggregate asset collateral holdings remain constant, but her aggregate numéraire deposits yield a return of $\delta_t$. 
In that sense, inventory arbitrage strategies effectively render price risk due to settlement latency obsolete. 
However, this benefit comes at the cost of storing the blockchain-based asset under the custody of the CEX without retaining corresponding private keys to control the holdings. 
As such, CEXes undermine the idea of decentralized finance without trusted intermediaries and reintroduce substantial counterparty risk. 
Similar to \cite{Biais.2019b} we model default risk as a fraction $c$ of deposits that get stolen while the arbitrageur waits for the arbitrage opportunity. Due to default risk, profits decrease proportional to wealth and reduce the returns on allocated capital.\footnote{For simplicity, we model default risks $c$ as a (stochastic) deposit loss upon arrival of the arbitrage opportunity living in $[0,1]$. Such an event may, for example, resemble the successful hack of the CEX in the spirit of \cite{Biais.2019b}. Without loss of generality, one can assume an underlying jump process such that a default occurs or does not occur with a (possibly stochastic) intensity.} 
After the arrival of the arbitrage opportunity, the inventory strategy thus delivers aggregate returns of $\delta_t - \mathbb{E}(c)$. We assume that the random variable $c$ exhibits known variance $\mathbb{V}(c) = \sigma_{c}^2$. 
Intuitively, higher perceived default risk in terms of higher expected losses due to default risk, $\mathbb{E}(c)$, corresponds to lower trust in CEXes. 

\subsection{Optimal Allocation and Limits to Arbitrage Capital}

For simplicity, we assume that the asset price dynamics are independent of default risk such that $Cov(z, c) = 0$.
The arbitrageur chooses $x_{\tau}$ and $x_{c}$ as the fractions of wealth allocated to the cross-exchange and inventory arbitrage strategies. The remaining fraction of wealth, $1 - x_\tau - x_c$, is invested into a risk-free outside option. As an outside option, the arbitrageur can decide to store her available capital in a private wallet, potentially paying interest $0 \leq r_f < \delta_t$ without any risk. 
Accordingly, her objective is 
\begin{align}
	\max_{x_\tau, x_c}CE\left(x_\tau, x_c\right) &= \left(x_\tau +x_c\right) \left(\delta_t - r_f\right) - x_c \mathbb{E}\left(c\right)- \frac{\gamma}{2} \left(x_\tau^2\sigma^2\tau + x_c^2\sigma^2_{c} \right).
\end{align}
The solution to the optimization problem yields the optimal allocation
\begin{align}
	x_\tau = \frac{\delta_t - r_f}{\gamma \sigma^2\tau} \text{ and } x_c = \frac{\delta_t - \mathbb{E}(c) - r_f}{\gamma \sigma^2_{c}}.
\end{align}
Hence, higher settlement latency ($\tau$) decreases the allocation into the cross-exchange strategy, $x_\tau$, and higher expected default risk $\mathbb{E}(c)$ or default risk uncertainty $\sigma^2_c$ reduces the allocation into the inventory strategy, $x_c$. 
Everything else equal, the allocations to the arbitrage strategies are not complements. Instead, higher risks reduce the aggregate capital allocated to the arbitrage strategies. 
The fraction of available arbitrage capital that is \emph{not} allocated into the risk-free outside option is
\begin{align}
	x_{\tau} + x_{c} = \frac{1}{\gamma\sigma_{c^2}}\left(\frac{\sigma_{c^2} + \sigma^2 \tau}{\sigma^2 \tau}\left(\delta_t - r_f\right) - \mathbb{E}\left(c\right)\right).
\end{align}
Hence, the arbitrageur allocates less capital to arbitrage activities when the risk aversion is higher, the price difference $\delta_t$ is smaller, the losses due to default risk are higher, or the settlement latency is higher. Taken together, the simple framework yields the following main insights from which we derive our testable predictions for the empirical analysis:
\begin{enumerate}
    \item \emph{Settlement latency implies limits to arbitrage.} Higher settlement latency $\tau$, spot volatility $\sigma$, and risk aversion $\gamma$ increase the required expected returns before exploiting price differences $\delta_t$ becomes incentive-compatible for the arbitrageur. Conversely, high cross-exchange price differences remain unexploited in periods of high settlement latency and asset return volatility.
    \item \emph{Trust into CEXes in the sense of low perceived default risk $\mathbb{E}\left(c\right)$ attracts more arbitrage capital and renders inventory strategies more attractive.} Settlement latency thus renders the exploitation of price differences across trustworthy CEXes less costly. As a result, the increased arbitrage capital renders price differences across exchanges with low default risk smaller.
    \item \emph{Order flow chases arbitrage opportunities until the resulting price pressure renders marginal trading costs too large.} A direct consequence of the decision to exploit price differences is the resulting order flow. Upon execution of the buy and sell transactions, arbitrageurs affect quoted prices on both CEXes.
\end{enumerate}

\section{Settlement Latency and Limits to Arbitrage}\label{sec:methodology}

As shown in a simplified framework in Section~\ref{sec:theory}, marginal costs of cross-exchange arbitrage increase with settlement latency, volatility, and risk aversion. 
In this section, we focus on the cross-exchange strategy and substantially relax the assumptions within the framework to reflect uncertainty in settlement latency and to allow for general utility functions. 
The theoretical framework is sufficiently general to reflect market frictions related to cryptocurrency trading such that we can empirically quantify costs for arbitrageurs due to settlement latency and the resulting economic consequences. 

More precisely, to empirically analyze the economic magnitudes and implications of settlement latency, we first relax the assumptions that volatility is constant over time and that the settlement latency is deterministic.  
Then, we introduce transaction costs and price impact. 
In our framework, we endogenize the optimal trade quantity and the optimal choice of settlement fees. 

First, we generalize the return dynamics compared to the simplistic framework from Section~\ref{sec:theory}. 
\begin{assumption}\label{assumption:price_process_general}
	For a given settlement latency $\tau$, we model the high-price exchange log price change $b^{s}_{t+\tau} - b^{s}_t$ as a Brownian motion with drift $\mu_t^s$ such that
	\begin{equation}\label{eq:return_process_with_drift}
		r^{b, s}_{(t:t+\tau)} = \delta^{b, s}_{t} + \tau\mu^s_t+ \int\limits_t ^{t+\tau} \sigma^s_{t} dW^s_k,
	\end{equation}
	where $\sigma^s_{t}$ denotes the spot volatility of the bid price process on the exchange $s$, and $W^s_k$ denotes a Wiener process. We assume that $\sigma^s_{t}$ is constant over the interval $[t, t+\tau]$.\footnote{Time-varying and stochastic volatility can be incorporated through a change of the timescale of the underlying Brownian motion. We provide the corresponding derivations in Appendix~\ref{sec:appendix_clock_change}. The time-variability of $\sigma^s_t$ and the presence of jumps would further increase the price risk the arbitrageur faces. In that sense, the bounds derived in this paper are conservative. }	
\end{assumption}
\noindent We now let the latency $\tau$ denote the \emph{random} waiting time until a transfer of the asset between CEXes is settled. Only weak assumptions regarding the stochastic nature of the settlement latency $\tau$ are required.
\begin{assumption}\label{assumption:latency_general}
	The settlement latency $\tau \in \mathbb{R}_+$  is a random variable equipped with a conditional probability distribution $\pi_t(\tau):=\pi\left(\tau|\mathcal{I}_t\right)$, where $\mathcal{I}_t$ denotes the set of available information at time $t$.
	We assume that the moment-generating function of $\pi_t(\tau)$, defined as $m_\tau\left(u\right) := \mathbb{E}_t\left(e^{u\tau}\right)$ for $u\in\mathbb{R}$, is finite on an interval around zero.
\end{assumption}
\noindent Assumptions~\ref{assumption:price_process_general} and \ref{assumption:latency_general} fully characterize the return distribution $\pi_t\left(r^{b, s}_{(t:t+\tau)}\right)$ through the interval of random length  from $t$ to $t+\tau$.
\begin{lemma}\label{lemma:characteristic_function}
	Under Assumptions~\ref{assumption:price_process_general} and \ref{assumption:latency_general}, $r^{b, s}_{(t:t+\tau)}$ exhibits the probability distribution
	\begin{equation}
		\pi_t\left(r^{b, s}_{(t:t+\tau)}\right) = \int_{\mathbb{R}_+}\pi_t\left(r^{b, s}_{(t:t+\tau)}\big|\tau \right)\pi_t\left(\tau\right)d\tau,
	\end{equation} 
	and corresponding characteristic function\footnote{The characteristic function fully describes the behavior and properties of a probability distribution. For a random variable $X$, $\varphi_{X}(u)$ is defined as $\varphi_{X}(u)=\mathbb{E}(e^{iuX})$, where $i$ is the imaginary unit and $u\in\mathbb{R}$ is the argument of the characteristic function.}
	\begin{equation}\label{eq:moments_general}
		\varphi_{r_{(t:t+\tau)}^{b, s}}\left(u\right) =e^{iu\delta^{b, s}_{t}}m_\tau\left(iu\mu^s_t-\frac{1}{2}u^2(\sigma^s_{t})^2\right).
	\end{equation}
\end{lemma}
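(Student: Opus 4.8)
The plan is to condition on the latency $\tau$, exploit the Gaussian structure of the price increment built into Assumption~\ref{assumption:price_process_general}, and then integrate $\tau$ out using the tower property of conditional expectation.

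First I would fix a realization of $\tau$. Since $\sigma^s_t$ is held constant on $[t,t+\tau]$ by Assumption~\ref{assumption:price_process_general}, the stochastic integral $\int_t^{t+\tau}\sigma^s_t\,dW^s_k$ is a Wiener integral of a deterministic (constant) integrand and hence Gaussian with mean zero and variance $\tau(\sigma^s_t)^2$. Adding the terms $\delta^{b,s}_t+\tau\mu^s_t$ from \eqref{eq:return_process_with_drift}, which are known given $\mathcal{I}_t$, the conditional law of the return is
\begin{equation*}
	r^{b,s}_{(t:t+\tau)}\,\big|\,\tau \;\sim\; \mathcal{N}\!\left(\delta^{b,s}_t + \tau\mu^s_t,\; \tau(\sigma^s_t)^2\right),
\end{equation*}
which is exactly the defining structure of a normal variance-mean mixture with mixing variable $\tau$ and mixing distribution $\pi_t(\tau)$ (conditional mean and conditional variance are both affine in $\tau$). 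The stated density representation then follows from the law of total probability, integrating $\pi_t(r^{b,s}_{(t:t+\tau)}\mid\tau)$ against $\pi_t(\tau)$ over $\mathbb{R}_+$.

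For the characteristic function I would apply the tower property, $\varphi_{r^{b,s}_{(t:t+\tau)}}(u)=\mathbb{E}_t\big[\mathbb{E}_t(e^{iu r^{b,s}_{(t:t+\tau)}}\mid\tau)\big]$, and insert the Gaussian conditional characteristic function from the previous step,
\begin{equation*}
	\mathbb{E}_t\!\left(e^{iu r^{b,s}_{(t:t+\tau)}}\,\big|\,\tau\right) = \exp\!\left(iu\big(\delta^{b,s}_t+\tau\mu^s_t\big)-\tfrac12 u^2\tau(\sigma^s_t)^2\right) = e^{iu\delta^{b,s}_t}\exp\!\left(\tau\Big(iu\mu^s_t-\tfrac12 u^2(\sigma^s_t)^2\Big)\right).
\end{equation*}
Pulling the non-random factor $e^{iu\delta^{b,s}_t}$ out of the outer expectation and recognizing the remaining expectation over $\tau$ as the moment-generating function of $\pi_t$ evaluated at the argument $iu\mu^s_t-\tfrac12 u^2(\sigma^s_t)^2$ gives \eqref{eq:moments_general}. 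The interchange of $\mathbb{E}_t$ with the conditional expectation is justified by Fubini, since $|e^{iur}|\le 1$.

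The one genuine subtlety — and the step I would be most careful about — is that $m_\tau$ is evaluated at a \emph{complex} rather than a real argument, so I must check the relevant expectation is well defined. The sign structure handles this: writing $z:=iu\mu^s_t-\tfrac12 u^2(\sigma^s_t)^2$, we have $\operatorname{Re}(z)=-\tfrac12 u^2(\sigma^s_t)^2\le 0$, and since $\tau\ge 0$ this yields $|e^{z\tau}| = e^{-\tfrac12 u^2(\sigma^s_t)^2\tau}\le 1$, so $\mathbb{E}_t(e^{z\tau})$ converges absolutely for every $u\in\mathbb{R}$. The hypothesis in Assumption~\ref{assumption:latency_general} that $m_\tau$ is finite on a neighbourhood of the origin is what additionally makes $m_\tau$ analytic there, so that this complex evaluation is the legitimate analytic continuation of the real moment-generating function (and can later be differentiated to read off the moments of $r^{b,s}_{(t:t+\tau)}$). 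Everything else is a routine computation with the normal characteristic function.
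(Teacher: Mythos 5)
Your proposal is correct. It is worth noting how it relates to the paper's own proof, which consists of a single line: an appeal to Equation~(2.2) of \cite{Barndorff-Nielsen.1982}, i.e.\ to the known characterization of normal variance--mean mixtures. What you have written is, in effect, the self-contained derivation that stands behind that citation: conditioning on $\tau$, observing that $r^{b,s}_{(t:t+\tau)}\mid\tau$ is Gaussian with mean $\delta^{b,s}_t+\tau\mu^s_t$ and variance $\tau(\sigma^s_t)^2$, integrating the conditional density against $\pi_t(\tau)$ for the mixture representation, and using the tower property together with the Gaussian conditional characteristic function to obtain $e^{iu\delta^{b,s}_t}\,\mathbb{E}_t\bigl(e^{\tau(iu\mu^s_t-\frac12 u^2(\sigma^s_t)^2)}\bigr)=e^{iu\delta^{b,s}_t}\,m_\tau\bigl(iu\mu^s_t-\tfrac12 u^2(\sigma^s_t)^2\bigr)$. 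Your additional care about evaluating $m_\tau$ at a complex argument is well placed and is precisely the point the citation quietly subsumes: since $\operatorname{Re}\bigl(iu\mu^s_t-\tfrac12 u^2(\sigma^s_t)^2\bigr)\le 0$ and $\tau\ge 0$, the expectation converges absolutely for every real $u$, and the finiteness of $m_\tau$ near the origin (Assumption~\ref{assumption:latency_general}) is what makes this the analytic continuation of the real moment-generating function and later permits differentiation at $u=0$ to extract moments, as used in the proof of Theorem~\ref{theorem:general}. So the mathematics is the same; your version simply buys self-containedness and an explicit justification of the complex-argument step, at the cost of length relative to the paper's one-line reference.
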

\begin{proof}
	All proofs are provided in Appendix~\ref{sec:appendix_proofs}.
\end{proof}
\noindent To quantify the arbitrageur's assessment of risk, we equip her with a general utility function.
\begin{assumption}\label{assumption:utility_function}
	The arbitrageur has a utility function $U_\gamma(r)$
	with risk aversion coefficient $\gamma$, where $r$ is the log return implied by her trading decision. We assume $U'_\gamma(r) > 0$ and $U''_\gamma(r) < 0$.
\end{assumption}
\noindent While the main result in Section~\ref{sec:theory} are based on a particular utility function, we now characterize the optimal decision for the cross-exchange strategy under the general Assumptions~\ref{assumption:price_process_general} to~\ref{assumption:utility_function}. The arbitrageur maximizes her expected utility $\mathbb{E}_t\left(U_\gamma(r)\right)$, which we express in terms of the CE. 
We derive the CE of exploiting concurrent cross-exchange price differences in the following theorem.
\begin{theorem}\label{theorem:general}
	Under Assumptions~\ref{assumption:price_process_general} - \ref{assumption:utility_function}, the certainty equivalent resulting from the cross-exchange arbitrage trade is given by 
	\begin{align}\label{eq:theorem_general}
		CE =  &\delta_t^{b, s} + \mathbb{E}_t(\tau)\mu^s_t +\sum\limits_{k=2}^\infty\frac{U_\gamma^{(k)}\left(\delta_t^{b, s} + \mathbb{E}_t(\tau)\mu^s_t\right)}{k!U'_\gamma\left(\delta_t^{b, s} + \mathbb{E}_t(\tau)\mu^s_t\right)}\mathbb{E}_t\left(\left(r_{(t:t+\tau)}^{b, s} - \delta_t^{b, s} - \mathbb{E}_t(\tau)\mu^s_t\right)^k\right),
	\end{align}
	where $U_\gamma^{(k)}\left(r\right) := \frac{\partial^k}{\partial r^k}U_\gamma\left(r\right)$.
\end{theorem}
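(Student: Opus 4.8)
The plan is to argue directly from the defining property of the certainty equivalent: $CE$ is the riskless log return for which the arbitrageur is indifferent to receiving it in place of the random arbitrage return, i.e.\ $U_\gamma(CE) = \mathbb{E}_t\bigl(U_\gamma(r_{(t:t+\tau)}^{b, s})\bigr)$. Write $\bar r := \delta_t^{b, s} + \mathbb{E}_t(\tau)\mu_t^s$. The first observation I would record is that $\bar r$ is exactly the conditional mean of the arbitrage return: integrating \eqref{eq:return_process_with_drift} over the distribution of $\tau$, the stochastic-integral term has zero conditional mean and drops out, leaving $\mathbb{E}_t\bigl(r_{(t:t+\tau)}^{b, s}\bigr) = \delta_t^{b, s} + \mathbb{E}_t(\tau)\mu_t^s = \bar r$ (equivalently, differentiate the characteristic function \eqref{eq:moments_general} once at $u=0$). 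This is why the \emph{central} moments $\mathbb{E}_t\bigl((r_{(t:t+\tau)}^{b, s}-\bar r)^k\bigr)$ are the natural objects to appear in the expansion.

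Next I would Taylor-expand the utility around $\bar r$,
\[
U_\gamma(r) = U_\gamma(\bar r) + U'_\gamma(\bar r)\,(r-\bar r) + \sum_{k=2}^{\infty}\frac{U_\gamma^{(k)}(\bar r)}{k!}\,(r-\bar r)^k ,
\]
take conditional expectations, and use $\mathbb{E}_t(r-\bar r)=0$ to annihilate the linear term, which gives
\[
\mathbb{E}_t\bigl(U_\gamma(r)\bigr) = U_\gamma(\bar r) + U'_\gamma(\bar r)\sum_{k=2}^{\infty}\frac{U_\gamma^{(k)}(\bar r)}{k!\,U'_\gamma(\bar r)}\,\mathbb{E}_t\bigl((r-\bar r)^k\bigr).
\]
By Assumption~\ref{assumption:utility_function} ($U'_\gamma>0$, $U''_\gamma<0$), $U_\gamma$ is strictly increasing and concave, so $U_\gamma^{-1}$ exists and is differentiable at $U_\gamma(\bar r)$ with derivative $1/U'_\gamma(\bar r)$. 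Applying $U_\gamma^{-1}$ to both sides and linearising it at $U_\gamma(\bar r)$ — equivalently, writing $U_\gamma(CE)-U_\gamma(\bar r)\approx U'_\gamma(\bar r)(CE-\bar r)$ and solving for $CE$ — delivers exactly \eqref{eq:theorem_general}.

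The step I expect to need the most care, and which carries the real content beyond bookkeeping, is the interchange of $\mathbb{E}_t$ with the infinite Taylor sum: this requires that all central moments $\mathbb{E}_t\bigl((r-\bar r)^k\bigr)$ exist and that $\sum_k \frac{U_\gamma^{(k)}(\bar r)}{k!}\mathbb{E}_t\bigl((r-\bar r)^k\bigr)$ converges. Assumption~\ref{assumption:latency_general} is precisely what supplies this: since $m_\tau$ is finite on a neighbourhood of zero, Lemma~\ref{lemma:characteristic_function} shows the return $r_{(t:t+\tau)}^{b, s}$, a normal variance--mean mixture, has moment-generating function $\mathbb{E}_t(e^{u r}) = e^{u\delta_t^{b, s}}m_\tau\!\bigl(u\mu_t^s+\tfrac12 u^2(\sigma_t^s)^2\bigr)$ finite near $u=0$, so all moments exist and grow at a rate compatible with term-by-term integration against the (smooth) $U_\gamma$ on the relevant domain — a condition one checks explicitly for the isoelastic specification used in the empirical part. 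The second, standard caveat is that the inversion is first order in the risk correction: the contributions of $(CE-\bar r)^k$, $k\ge 2$, to $U_\gamma(CE)$ are of higher order in the moments and are absorbed, which is the precise sense in which \eqref{eq:theorem_general} is to be read.
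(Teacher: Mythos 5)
Your proposal is correct and follows essentially the same route as the paper's proof: computing $\mathbb{E}_t\bigl(r_{(t:t+\tau)}^{b,s}\bigr)=\delta_t^{b,s}+\mathbb{E}_t(\tau)\mu_t^s$ from Lemma~\ref{lemma:characteristic_function}, Taylor-expanding $U_\gamma$ around this mean under the expectation, and then linearising the certainty-equivalent relation $U_\gamma(CE)=\mathbb{E}_t\bigl(U_\gamma(r)\bigr)$ to first order at the mean (the Markowitz-style step the paper makes explicit), before solving for $CE$. Your additional remarks on moment existence via the finiteness of $m_\tau$ near zero and on the first-order nature of the inversion only make explicit what the paper assumes implicitly.
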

\begin{proof}
	All proofs are provided in Appendix~\ref{sec:appendix_proofs}.
\end{proof}

\noindent Theorem~\ref{theorem:general} allows us to compare the expected utility of executing the arbitrage trade versus staying idle (which yields a riskless return of zero). The arbitrageur is willing to exploit cross-exchange price differences if and only if the CE given by Equation \eqref{eq:theorem_general} is positive. 
\begin{definition}\label{def:ce_root}
	We define the {arbitrage bound} $d_t^s$ as the minimum price difference necessary such that the arbitrageur prefers to trade. Formally, $d_t^s$ is the maximum of zero and the unique root\footnote{By definition of the CE, we have $F(d)=U_\gamma^{-1}\left(\mathbb{E}_t\left(U_\gamma\left(d+\mu^s_t\tau+\int_{t}^{t+\tau}\sigma_t^sW_k^s\right)\right)\right)$. Since $U_\gamma'(r) > 0$, the expectation is increasing in $d$. Moreover, since $U_\gamma''(r) < 0$, the inverse $U_\gamma^{-1}(r) > 0$ is also strictly concave. Thus, $F(d)$ is strictly increasing and has a unique root.} of
	\begin{align}\label{definition:ce_root}
		F(d)= &d + \mathbb{E}_t(\tau)\mu^s_t + \sum\limits_{k=2}^\infty\frac{U_\gamma^{(k)}\left(d + \mathbb{E}_t(\tau)\mu^s_t\right)}{k!U_\gamma '\left(d + \mathbb{E}_t(\tau)\mu^s_t\right)}\mathbb{E}_t\left(\left(r_{(t:t+\tau)}^{b, s} - d - \mathbb{E}_t(\tau)\mu^s_t\right)^k\right).
	\end{align}
\end{definition}
\noindent 
Definition~\ref{def:ce_root} is a generalization of the arbitrage bounds derived in Equation~\eqref{eq:simple_arbitrage_bounds}. 
%More intuitive representations of the arbitrage bound can be derived by assuming that the arbitrageur is equipped with constant absolute or relative risk aversion. 
Below we follow \cite{Schneider.2015} and ignore the impact of higher order moments above the fourth degree of the Taylor representation in Equation~\eqref{definition:ce_root}.
Under the additional assumption of a power utility function, Lemma~\ref{lemma:optimal_strategy_iso} provides an analytical closed-form expression for $d_t^s$.
\begin{lemma}\label{lemma:optimal_strategy_iso}
	If, in addition to Assumptions \ref{assumption:price_process_general} and \ref{assumption:latency_general}, the arbitrageur has an isoelastic utility function $U_\gamma(r) := \frac{\left(1+r\right)^{1-\gamma}}{1-\gamma}$ with risk aversion coefficient $\gamma > 1$, the arbitrage bound for $\mu_t^s = 0$ is given by
	\begin{equation}
		d_t^s = \frac{1}{2}\sigma^s_t\sqrt{\gamma\mathbb{E}_t\left(\tau\right)+\sqrt{\gamma^2\mathbb{E}_t\left(\tau\right)^2+2\gamma(\gamma+1)(\gamma+2)\left(\mathbb{V}_t(\tau)+\mathbb{E}_t(\tau)^2\right)}}.
	\end{equation}
\end{lemma}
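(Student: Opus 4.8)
The plan is to specialize the root equation~\eqref{definition:ce_root} defining $d_t^s$ to the isoelastic utility under the two stated simplifications---a zero drift $\mu_t^s=0$ and truncation of the Taylor series after the fourth-order term---and then to solve the resulting algebraic equation in closed form. The first step is to evaluate the central moments appearing in~\eqref{definition:ce_root}. With $\mu_t^s=0$ the expansion in~\eqref{definition:ce_root} is centered at $d$, so the relevant moments are $\mathbb{E}_t\big[(r_{(t:t+\tau)}^{b,s}-d)^k\big]=\mathbb{E}_t\big[\big(\int_t^{t+\tau}\sigma_t^s\,dW_k^s\big)^k\big]$. By Assumption~\ref{assumption:price_process_general}, conditionally on $\tau$ this increment is $N\!\big(0,(\sigma_t^s)^2\tau\big)$; hence its odd moments vanish, and by the tower property its second moment equals $(\sigma_t^s)^2\mathbb{E}_t(\tau)$ while its fourth moment equals $3(\sigma_t^s)^4\mathbb{E}_t(\tau^2)=3(\sigma_t^s)^4\big(\mathbb{V}_t(\tau)+\mathbb{E}_t(\tau)^2\big)$, using the Gaussian kurtosis factor $3$ and $\mathbb{E}_t(\tau^2)=\mathbb{V}_t(\tau)+\mathbb{E}_t(\tau)^2$. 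In particular the third-order ($k=3$) term drops out, leaving only the $k=2$ and $k=4$ contributions.

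The second step is to substitute the isoelastic utility: differentiating $U_\gamma$ gives the derivative ratios $U_\gamma^{(2)}/U_\gamma'$ and $U_\gamma^{(4)}/U_\gamma'$ multiplying the two surviving moment terms. Inserting these ratios together with the moments from the first step into $F(d)=0$ and clearing denominators reduces the equation to the biquadratic
\begin{equation*}
 d^{4}\;-\;\tfrac{\gamma}{2}\,(\sigma_t^s)^2\,\mathbb{E}_t(\tau)\,d^{2}\;-\;\tfrac{1}{8}\,\gamma(\gamma+1)(\gamma+2)\,(\sigma_t^s)^4\big(\mathbb{V}_t(\tau)+\mathbb{E}_t(\tau)^2\big)\;=\;0 .
\end{equation*}

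The final step is to solve this as a quadratic in $y:=d^{2}$ via the quadratic formula. By Vieta's formulas the product of its two roots equals its (negative) constant term, so the roots have opposite signs and exactly one is positive; discarding the negative root (inadmissible since $d^2\ge0$) and taking the nonnegative square root of the positive one yields exactly $d_t^s=\tfrac12\sigma_t^s\sqrt{\gamma\mathbb{E}_t(\tau)+\sqrt{\gamma^2\mathbb{E}_t(\tau)^2+2\gamma(\gamma+1)(\gamma+2)\big(\mathbb{V}_t(\tau)+\mathbb{E}_t(\tau)^2\big)}}$. This value is nonnegative, so the ``maximum with zero'' in the definition is slack, and it is the unique admissible root, consistent with the uniqueness asserted in the footnote to the definition. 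I expect the only genuine obstacle to be the bookkeeping in the second step---tracking the numerical constants coming from the isoelastic derivative ratios and from the mixture moments carefully enough that everything collapses to the clean biquadratic above---while the concluding quadratic solve and root selection are routine.
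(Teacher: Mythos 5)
Your proposal follows essentially the same route as the paper's own proof: specializing the certainty-equivalent root equation with $\mu_t^s=0$, using that the mixture's odd central moments vanish while the second and fourth equal $(\sigma_t^s)^2\mathbb{E}_t(\tau)$ and $3(\sigma_t^s)^4\mathbb{E}_t(\tau^2)=3(\sigma_t^s)^4\left(\mathbb{V}_t(\tau)+\mathbb{E}_t(\tau)^2\right)$, arriving at exactly the paper's quartic and then selecting its unique positive root. The only immaterial difference is the root-selection argument: you apply Vieta's formulas to the quadratic in $d^2$, whereas the paper invokes Descartes' rule of signs on the quartic in $d$ and enumerates its four solutions.
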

\begin{proof}
	All proofs are provided in Appendix~\ref{sec:appendix_proofs}.
\end{proof}
\noindent Hence, $d_t^s$ positively depends on (i)  the arbitrageur's risk aversion,~$\gamma$, (ii) the volatility on the high-price exchange, $\sigma^s_t$, (iii) the conditional expected waiting time until settlement, $\mathbb{E}_t\left(\tau\right)$, and $(iv)$ the conditional variance of the waiting time, $\mathbb{V}_t\left(\tau\right)$. 

\subsection{Transaction Costs}\label{sec:trcosts}
Most CEXes demand trading fees that agents pay upon the execution of an off-chain transaction. Market participants typically pay fees as a percentage of the trading volume. 
Similarly, broker-dealers usually charge markups for the execution of trades in over-the-counter exchanges. 
Moreover, exchanges typically exhibit limited supply through price-quantity schedules that agents are willing to trade, possibly leading to substantial price impacts for large trading quantities.
We make the following assumption to incorporate trading fees and liquidity effects into our framework.
\begin{assumption}\label{assumption:transaction_costs}
	Trading the quantity $q\geq0$ on exchange $i$ exhibits proportional transaction costs such that the average per unit bid and ask prices are
	\begin{align}
		B^{i}_t(q) &= B_t^{i}\left(1 - \rho^{i,B}(q)\right)\\
		A^{i}_t(q) &= A_t^{i}\left(1 + \rho^{i,A}(q)\right),
	\end{align}
	with $\rho^{i,B}(q)\geq 0$ and $\rho^{i,A}(q)\geq 0$, both monotonically increasing in $q$.
\end{assumption}
\noindent The presence of transaction costs changes the objective function of the arbitrageur, who focuses on maximizing returns net of transaction costs defined as  
\begin{align}\label{eq:transaction_cost_adjustment}
	\tilde r_{(t:t+\tau)}^{b, s} &= b^{s}_{t+\tau} - b^{s}_t + \delta_t^{b, s} - \log\left(\frac{1 + \rho^{b, A}(q)}{1 - \rho^{s, B}(q)}\right) \nonumber\\
	&= r_{(t:t+\tau)}^{b, s} - \log\left(\frac{1 + \rho^{b, A}(q)}{1 - \rho^{s, B}(q)}\right).
\end{align} 
Intuitively, Equation~\eqref{eq:transaction_cost_adjustment} shows that transaction costs increase the instantaneous return required to make the arbitrageur indifferent between trading and staying idle. The following lemma summarizes the arbitrageur's decision problem in the presence of transaction costs.
\begin{lemma}\label{lemma:trans_cost}
	Under assumptions \ref{assumption:price_process_general} - \ref{assumption:transaction_costs}, the arbitrageur prefers to trade a quantity $q>0$ over staying idle if \begin{equation}\label{eq:boundary_txcosts}
		\delta_t^{b,s} - \log\left(\frac{1 + \rho^{b, A}(q)}{1 - \rho^{s, B}(q)}\right) > d_t^s.
	\end{equation}	
\end{lemma}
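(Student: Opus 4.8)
The plan is to reduce the claim to the frictionless analysis already carried out in Theorem~\ref{theorem:general} and the definition of $d_t^s$, by observing that transaction costs enter the payoff only as a deterministic shift of the instantaneous return. First I would abbreviate $c(q) := \log\left(\frac{1 + \rho^{b, A}(q)}{1 - \rho^{s, B}(q)}\right)$ and note that, by Assumption~\ref{assumption:transaction_costs}, $\rho^{b,A}(q)\geq 0$ and $\rho^{s,B}(q)\in[0,1)$, so $c(q)\geq 0$ is a quantity that is known at time $t$: it depends on $q$ and the current quotes but not on $\tau$ or on the future evolution of prices. Combining Equation~\eqref{eq:transaction_cost_adjustment} with the return process of Assumption~\ref{assumption:price_process_general}, the net-of-cost payoff reads
\begin{equation*}
	\tilde r_{(t:t+\tau)}^{b, s} = \left(\delta_t^{b, s} - c(q)\right) + \tau\mu^s_t + \int_t^{t+\tau}\sigma^s_t\, dW^s_k ,
\end{equation*}
which is structurally identical to $r_{(t:t+\tau)}^{b, s}$ with $\delta_t^{b, s}$ replaced by $\delta_t^{b, s} - c(q)$; in particular its stochastic component is unchanged.

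Next I would use that the arbitrageur prefers trading the quantity $q$ to staying idle (which yields the riskless return zero and utility $U_\gamma(0)$) if and only if $\mathbb{E}_t\!\left(U_\gamma\!\left(\tilde r_{(t:t+\tau)}^{b, s}\right)\right) > U_\gamma(0)$, equivalently---applying the strictly increasing map $U_\gamma^{-1}$---if and only if the certainty equivalent of $\tilde r_{(t:t+\tau)}^{b, s}$ is positive. By the characterization of $F$ recorded in the footnote to the definition of $d_t^s$ (or, equivalently, by Theorem~\ref{theorem:general} applied with $\delta_t^{b, s} - c(q)$ in the role of $\delta_t^{b, s}$, since all centered moments entering the Taylor expansion depend only on the untouched stochastic part), this certainty equivalent equals $F\!\left(\delta_t^{b, s} - c(q)\right)$. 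Recalling that $F$ is strictly increasing and that $d_t^s$ is the maximum of zero and the unique root $d^\ast$ of $F$, we have $d_t^s \geq d^\ast$.

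Finally I would conclude: if $\delta_t^{b, s} - c(q) > d_t^s$, then $\delta_t^{b, s} - c(q) > d^\ast$, and strict monotonicity of $F$ yields $F\!\left(\delta_t^{b, s} - c(q)\right) > F(d^\ast) = 0$; hence the certainty equivalent of the net-of-cost payoff is positive and the arbitrageur strictly prefers to trade, which is precisely Equation~\eqref{eq:boundary_txcosts}. The only step requiring care is the identification of the net-of-cost certainty equivalent with $F\!\left(\delta_t^{b, s} - c(q)\right)$: this goes through exactly because $c(q)$ is deterministic given $\mathcal{I}_t$, so subtracting it shifts the argument of the expected-utility functional without altering the distribution of $r_{(t:t+\tau)}^{b, s} - \delta_t^{b, s}$. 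Since the lemma asserts only a sufficient condition, there is no need to discuss the degenerate case $d^\ast < 0$, in which the inequality in Equation~\eqref{eq:boundary_txcosts} is still sufficient but ceases to be necessary.
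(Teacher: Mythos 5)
Your proof is correct and follows essentially the same route as the paper's: both treat the transaction-cost term $\log\bigl((1+\rho^{b,A}(q))/(1-\rho^{s,B}(q))\bigr)$ as a deterministic shift of the instantaneous return, invoke the certainty-equivalent expansion of Theorem~\ref{theorem:general} (equivalently the function $F$ from the definition of $d_t^s$), and conclude via strict monotonicity of $F$. Your handling of the case $d_t^s=\max\{0,d^\ast\}$ with $d^\ast<0$ is in fact slightly more careful than the paper's, which phrases the conclusion as an ``if and only if'' that strictly speaking holds only when the root of $F$ is nonnegative, but this does not affect the stated (sufficiency) claim.
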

\begin{proof}
	All proofs are provided in Appendix~\ref{sec:appendix_proofs}.	
\end{proof}
\noindent Most importantly in the context of blockchain-based assets, settlement fees play a pivotal role in the architecture of many consensus protocols. 
Validators typically receive a reward for confirming transactions which (at least partly) comprises fees that originators of transactions offer to provide validators incentives to prioritize the settlement of transactions that include a higher fee \citep[e.g., ][]{Easley.2017, Huberman.2021, Lehar.2022}. 

\begin{assumption}\label{assumption:settlement_fees}
	A settlement fee $f>0$ implies a latency distribution $\pi_{t}\left(\tau| f\right)$ that can be ordered in the sense that for $\tilde f > f$, $\pi_{t}\left(\tau| f\right)$ first-order stochastically dominates $\pi_{t}\left(\tau| \tilde f\right)$, i.e., $\mathbb{P}\left(\tau\leq x|\tilde{f}\right)>\mathbb{P}\left(\tau\leq x|f\right)$ for all $x\in\mathbb{R}_+$.
\end{assumption}

\noindent The ordering of latency distributions in Assumption~\ref{assumption:settlement_fees} implies a lower CE of trading for $\tilde f > f$.\footnote{We refer to \cite{Hadar.1969} and \cite{Levy.1992} for an explicit analysis of the relation between stochastic dominance and expected utility.} Denote by $d_{t}^s(f)$ the arbitrage bound associated with the latency distribution $\pi_{t}\left(\tau|f\right)$. Theorem~\ref{theorem:general} then implies that $d_{t}^s(f) > d_{t}^s (\tilde f)$, i.e., by paying a higher settlement fee, the arbitrageur can reduce the risk associated with settlement latency and becomes more likely to trade. For simplicity, we assume that $d_{t}^s(f)$ is differentiable such that Assumption~\ref{assumption:settlement_fees} implies $\frac{\partial}{\partial f}d_t^s(f)<0$. 

While settlement fees reduce the latency, they are costly for the arbitrageur. 
Since the arbitrageur does not hold inventory of the asset on the low-price exchange, she has to acquire the additional quantity $f$ to spend it in the settlement process. 
In line with the typical practical implementation of settlement fees, we assume that the arbitrageur has to pay the settlement fee in terms of the underlying asset. 
Given the transaction costs from above, the choice of $f$ thus also affects the optimal trading quantity $q$. 
The following lemma characterizes the arbitrageur's decision problem in the presence of transaction costs and settlement fees.

\begin{lemma}\label{lemma:settlement_fees}
	Under assumptions \ref{assumption:price_process_general} - \ref{assumption:settlement_fees}, the arbitrageur prefers to trade a quantity $q>0$ and pay a settlement fee $f>0$ over staying idle if \begin{equation}\label{eq:settlement_fees}
		\delta_t^{b,s} - \log\left(\frac{1 + \rho^{b, A}(q + f)}{1 - \rho^{s, B}(q)}\right) > d_t^s(f).
	\end{equation}	
\end{lemma}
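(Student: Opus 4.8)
The plan is to reduce Lemma~\ref{lemma:settlement_fees} to Lemma~\ref{lemma:trans_cost} by altering only the two objects that a positive settlement fee affects: the size of the order the arbitrageur must place on the buy-side market, and the latency distribution she faces. The mixture structure of the return, the certainty-equivalent (Taylor) representation of Theorem~\ref{theorem:general}, and the strict monotonicity of the root map that defines the arbitrage bound are all reused unchanged, now read off the fee-conditional latency law $\pi_t(\tau\mid f)$.

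First I would write down the fee- and transaction-cost-adjusted return. Because the arbitrageur holds no inventory of the asset on market~$b$ and, per Assumption~\ref{assumption:settlement_fees} and the discussion preceding it, must pay the fee in units of the asset, delivering a quantity $q$ to market~$s$ while attaching the fee $f$ forces her to acquire $q+f$ units on market~$b$. By Assumption~\ref{assumption:transaction_costs} the relevant average ask is then $A_t^b\bigl(1+\rho^{b,A}(q+f)\bigr)$, while the average bid realised for the $q$ units sold on market~$s$ is $B_{t+\tau}^s\bigl(1-\rho^{s,B}(q)\bigr)$. Running the computation behind Equation~\eqref{eq:transaction_cost_adjustment} with $q+f$ in place of $q$ on the buy side yields the adjusted log return
\begin{equation*}
	\tilde r^{b,s}_{(t:t+\tau)} = r^{b,s}_{(t:t+\tau)} - \log\!\left(\frac{1+\rho^{b,A}(q+f)}{1-\rho^{s,B}(q)}\right),
\end{equation*}
which is simply $r^{b,s}_{(t:t+\tau)}$ with its deterministic component $\delta^{b,s}_t$ replaced by $\delta^{b,s}_t-\log\bigl((1+\rho^{b,A}(q+f))/(1-\rho^{s,B}(q))\bigr)$; the stochastic price-risk part is untouched.

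Second I would absorb the change of the latency law. Paying $f$ replaces $\pi_t(\tau)$ by $\pi_t(\tau\mid f)$, which we take to still satisfy Assumption~\ref{assumption:latency_general}, so Theorem~\ref{theorem:general} and the construction of the arbitrage bound apply verbatim with $\pi_t(\tau\mid f)$ substituted throughout; in particular the map $F$ of Equation~\eqref{definition:ce_root}, now built from $\pi_t(\tau\mid f)$, remains strictly increasing (the footnote argument uses only $U'_\gamma>0$ and $U''_\gamma<0$) and its unique non-negative root is by definition $d_t^s(f)$. Invoking Theorem~\ref{theorem:general} with latency $\pi_t(\tau\mid f)$ and instantaneous return $\delta^{b,s}_t-\log\bigl((1+\rho^{b,A}(q+f))/(1-\rho^{s,B}(q))\bigr)$ identifies the certainty equivalent of $\tilde r^{b,s}_{(t:t+\tau)}$ with $F$ evaluated at that argument. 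The arbitrageur strictly prefers the trade $(q,f)$ to staying idle if and only if this certainty equivalent is positive, and by strict monotonicity of $F$ together with $F(d_t^s(f))=0$ that is equivalent to
\begin{equation*}
	\delta_t^{b,s} - \log\!\left(\frac{1 + \rho^{b, A}(q + f)}{1 - \rho^{s, B}(q)}\right) > d_t^s(f),
\end{equation*}
which is the claim; the boundary case $d_t^s(f)=0$ is immediate, since then any strictly positive adjusted instantaneous return already makes the certainty equivalent positive.

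The step I expect to require the most care is the first: correctly tracking that the settlement fee is denominated in the traded asset and must itself be bought on the illiquid buy-side book, so that the price-impact term in the buy-side wedge is evaluated at the enlarged quantity $q+f$ while the sell-side wedge stays at $\rho^{s,B}(q)$ --- precisely the asymmetry between the numerator and the denominator of the logarithm in~\eqref{eq:settlement_fees}. The only other point to verify is that $\pi_t(\tau\mid f)$ inherits the finite-moment-generating-function condition of Assumption~\ref{assumption:latency_general} for every admissible $f$, so that the Taylor representation underlying Theorem~\ref{theorem:general} remains valid; granting that, the remainder is a direct transcription of Lemma~\ref{lemma:trans_cost}.
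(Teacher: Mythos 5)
Your proof is correct and follows essentially the same route as the paper, whose own proof is just the one-line remark that the claim follows from Lemma~\ref{lemma:trans_cost} and Theorem~\ref{theorem:general}; you have simply made explicit the two substitutions that remark relies on (evaluating the buy-side price impact at $q+f$ and replacing $\pi_t(\tau)$ by $\pi_t(\tau\mid f)$ so that the root of $F$ becomes $d_t^s(f)$). Your added caveat that $\pi_t(\tau\mid f)$ must retain the finite moment-generating-function property of Assumption~\ref{assumption:latency_general} is a reasonable point the paper leaves implicit.
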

\begin{proof}
	All proofs are provided in Appendix~\ref{sec:appendix_proofs}.
\end{proof}
\noindent Lemma~\ref{lemma:optimal_strategy_iso} shows that blockchain congestion directly affects the trading decisions of cross-exchange arbitrageurs. Consider a scenario where a blockchain-based platform generates excess demand for validation services, which manifests in a substantial increase in settlement fees. Marginal costs for cross-exchange arbitrageurs would increase, and thus violations of the law of one price may persist. Such \emph{spillovers} can be induced, for instance, by DEX arbitrage competition \citep[e.g.,][]{Capponi.2021} or Ransomware activity \citep[e.g.,][]{Sokolov.2021}. 

\subsection{Optimal Trading Quantity}
\noindent Trading a larger quantity might deliver higher total returns, but it comes at higher transaction costs on both the low-price and high-price exchange. Moreover, paying higher settlement fees leads to lower arbitrage bounds but at the cost of additional transaction costs on the low-price exchange. The arbitrageur's trading decision thus features a trade-off between $q$ and $f$ with endogenous arbitrage bounds. Formally, the arbitrageur aims to maximize total profits
\begin{equation}
	\underset{\left\{q,f\right\}\in\mathbb{R}_+^2}{\max} B_t^s\left(1-\rho^{s,B}(q)\right)q - A_t^b(1+\rho^{b,A}(q+f))(q+f) 
\end{equation}
subject to the constraint
\begin{equation}\label{eq:constraint}
	\delta_t^{b,s}- \log\left(\frac{1 + \rho^{b, A}(q+f)}{1 - \rho^{s, B}(q)}\right) \geq d_t^s(f).
\end{equation}
We characterize the arbitrageur's optimal choice of trading quantities and settlement fees in the following lemma.
\begin{lemma}\label{lemma:optimal_q_f}
	A total return maximizing arbitrageur only pays a settlement fee $f^*>0$ to trade a quantity $q^*>0$ if the following necessary conditions are met:
	\begin{align}
		\frac{1-\rho^{s,B}(q^*)}{q^*}&>\frac{\partial}{\partial q}\rho^{s,B}(q^*)  \label{eq:choice1} \\
		-\frac{\partial}{\partial f}d_t^{s}(f^*) &> \frac{\frac{\partial}{\partial q}\rho^{s,B}(q^*)}{1 + \rho^{s,B}(q^*)}. \label{eq:choice2}
	\end{align}	
	Otherwise, the arbitrageur optimally sets $f^*=0$. 
	Moreover, a total return maximizing arbitrageur chooses trading quantities $q^*>0$ and settlement fees $f^*\geq 0$ such that 
	\begin{align}\label{eq:optimal_q_f}
		\delta_t^{b,s} - \log\left(\frac{1 + \rho^{b, A}(q^* + f^*)}{1 - \rho^{s, B}(q^*)}\right) = d_t^s(f^*).
	\end{align}
\end{lemma}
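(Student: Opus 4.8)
The plan is to treat the arbitrageur's problem as a constrained optimisation and apply the Karush--Kuhn--Tucker (KKT) conditions. I would write the objective as $\Pi(q,f):=B_t^s\bigl(1-\rho^{s,B}(q)\bigr)q-A_t^b\bigl(1+\rho^{b,A}(q+f)\bigr)(q+f)$ and the arbitrage-bound constraint from \eqref{eq:constraint} as $g(q,f):=\delta_t^{b,s}-\log\bigl(\tfrac{1+\rho^{b,A}(q+f)}{1-\rho^{s,B}(q)}\bigr)-d_t^s(f)\ge 0$, together with the sign constraints $q\ge 0$ and $f\ge 0$. Assuming the maximum is attained, I would form the Lagrangian $\mathcal L=\Pi+\lambda g$ with multiplier $\lambda\ge 0$ and non-negativity multipliers $\mu_q,\mu_f\ge 0$, and record stationarity together with complementary slackness. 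The two signs that drive the whole argument are $\partial_f\Pi<0$ strictly---buying extra units of the asset merely to spend them on fees only inflates the buy-side outlay---and $\partial_q g<0$---trading a larger quantity tightens the bound constraint---whereas $\partial_q\Pi$ and $\partial_f g$ may each take either sign, $\partial_f g$ being positive exactly when the marginal latency reduction $-\tfrac{\partial}{\partial f}d_t^s(f)$ outweighs the marginal proportional buy-side cost.

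The first target is the binding constraint \eqref{eq:optimal_q_f}. If $f^*>0$ and $g$ were slack, a marginal reduction of $f$ would stay feasible and strictly raise $\Pi$, contradicting optimality; equivalently, $f$-stationarity $\partial_f\Pi+\lambda\,\partial_f g=0$ together with $\partial_f\Pi<0$ forces $\lambda>0$, and $\lambda>0$ is exactly complementary slackness. If $f^*=0$, then in the economically relevant regime---the one in which the settlement bound actually limits trade---profit is still increasing in $q$ at the largest feasible quantity, so $q^*$ again lies on $g=0$; I would make this precise via $q$-stationarity, which gives $\partial_q\Pi=\lambda\,(-\partial_q g)\ge0$, i.e.\ the arbitrageur always wants to trade more and is held back only by $g$.

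For the characterisation of when a settlement fee is paid, assume $q^*>0$ and $f^*>0$, so $\mu_q=\mu_f=0$. As just noted, $f$-stationarity forces $\lambda>0$ and $\partial_f g>0$, and writing out $\partial_f g>0$ is inequality \eqref{eq:choice2}. The $q$-stationarity condition $\partial_q\Pi+\lambda\,\partial_q g=0$ with $\lambda>0$ and $\partial_q g<0$ then forces $\partial_q\Pi>0$; subtracting the $f$-condition from the $q$-condition cancels the common buy-side marginal-cost term and leaves $B_t^s\bigl(1-\rho^{s,B}(q^*)-q^*\tfrac{\partial}{\partial q}\rho^{s,B}(q^*)\bigr)=\lambda\bigl(\tfrac{\partial_q\rho^{s,B}(q^*)}{1-\rho^{s,B}(q^*)}-\tfrac{\partial}{\partial f}d_t^s(f^*)\bigr)$, whose right-hand side is strictly positive; dividing by $B_t^s q^*>0$ gives inequality \eqref{eq:choice1}. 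The ``otherwise'' clause is the contrapositive: if either \eqref{eq:choice1} or \eqref{eq:choice2} fails, there is no KKT point with $q^*>0$ and $f^*>0$, so the maximiser has $f^*=0$, and \eqref{eq:optimal_q_f} still holds for the $f$-restricted problem by the same slackness argument.

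I expect the main obstacle to be the corner-case bookkeeping rather than the calculus. One must handle---or explicitly confine the statement to the ``relevant regime'' that excludes---the knife-edge in which $f^*=0$ and the unconstrained profit-maximising quantity is already feasible, since then $g$ would be slack and \eqref{eq:optimal_q_f} would fail; this is precisely the case in which settlement latency imposes no binding limit on trade. One also has to track the correct arguments ($q^*$ versus $q^*+f^*$) in the various $\rho$-terms so that the combination of stationarity conditions lands exactly on the printed inequalities, and to check that $\mu_q=\mu_f=0$ under $q^*,f^*>0$ so the interior first-order conditions are legitimate. Everything else---differentiating $\Pi$ and $g$ and reading off signs---is routine.
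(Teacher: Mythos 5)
Your route is the same as the paper's: cast the problem as a Lagrangian, use KKT with complementary slackness, get the binding constraint \eqref{eq:optimal_q_f} from the fact that profit is strictly decreasing in $f$, and get the fee conditions by combining the $q$- and $f$-stationarity conditions. Your handling of the binding constraint for $f^*>0$ and your derivation of \eqref{eq:choice1} are sound and match the paper: the subtraction identity you display is precisely the paper's Equation \eqref{eq:fee_conditions}, and positivity of the multiplier forces $1-\rho^{s,B}(q^*)>q^*\frac{\partial}{\partial q}\rho^{s,B}(q^*)$. Your flag about the corner case in which $f^*=0$ and the unconstrained optimum leaves the constraint slack is fair; the paper's own elimination of $\xi=0$ really only rules out candidates with $f>0$, so you are not behind the paper on that point.

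The genuine gap is your derivation of \eqref{eq:choice2}. You assert that ``writing out $\partial_f g>0$ is inequality \eqref{eq:choice2}'', but $\partial_f g=-\frac{\partial}{\partial f}d_t^s(f)-\frac{\frac{\partial}{\partial q}\rho^{b,A}(q+f)}{1+\rho^{b,A}(q+f)}$, so its positivity is a condition on the \emph{buy-side} impact function $\rho^{b,A}$ evaluated at $q^*+f^*$, whereas \eqref{eq:choice2} is stated in terms of the \emph{sell-side} function $\rho^{s,B}$ at $q^*$; these are different objects and the step as written fails. In the paper, \eqref{eq:choice2} comes instead from the very identity you already have: solving the subtracted stationarity conditions for the multiplier as in \eqref{eq:fee_conditions} and requiring $\xi>0$ forces numerator and denominator to share a sign, after which the branch where both are negative (marginal sell-side price impact above the average impact, the paper's condition (ii)) is excluded by the economic argument that the arbitrageur would then cut $q$, leaving \eqref{eq:choice1} and \eqref{eq:choice2} as the necessary pair. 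On your (correctly computed) algebra the denominator reads $\frac{\frac{\partial}{\partial q}\rho^{s,B}(q^*)}{1-\rho^{s,B}(q^*)}-\frac{\partial}{\partial f}d_t^s(f^*)$, which is automatically positive, so \eqref{eq:choice2} never emerges from your argument at all; the discrepancy traces to the sign conventions in the paper's \eqref{eq:kkt1} (note the paper itself writes $1-\rho^{s,B}$ in its condition (i) but $1+\rho^{s,B}$ in the lemma), and you would need to reconcile your derivatives with those before the printed condition \eqref{eq:choice2} can be recovered; sourcing it from $\partial_f g>0$ alone does not do it.
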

\begin{proof}
	All proofs are provided in Appendix~\ref{sec:appendix_proofs}.
\end{proof}
\noindent The first part of the lemma provides conditions for the choice of the settlement fee. According to Equation~\eqref{eq:choice1}, the arbitrageur chooses a positive settlement fee as long as the marginal price impact for the trading quantity is below the average price impact. 
However, Equation~\eqref{eq:choice2} shows that reducing the arbitrage bound through a higher settlement fee must exceed the implied opportunity costs, i.e., the possible gain in selling a higher quantity. 
Consequently, the arbitrageur tends to pay a higher settlement fee if the high-price exchange is very liquid (keeping the marginal price impact low) and the settlement fee has a high impact on the arbitrage bound (i.e., reducing the latency and thus risk).
If any of these two conditions is violated, the arbitrageur optimally chooses not to pay any settlement fee but might still decide to trade.

The second part of the lemma states that the arbitrageur always chooses trading quantities and settlement fees such that the constraint in Equation~\eqref{eq:constraint} binds. If the constraint would not be binding, the arbitrageur could trade a larger quantity to increase her total returns at the expense of higher transaction costs. 

\section{Bitcoin Order Book and Network Data}\label{sec:data}

We gather novel and granular data on Bitcoin, the largest blockchain-based asset in terms of market valuation, to assess the economic relevance of settlement latency as friction for cross-exchange arbitrage. 

Testable implications of the theoretical framework are at least threefold: 
We show for general assumptions in Section~\ref{sec:methodology} that marginal costs for cross-exchange arbitrage increase with settlement latency, latency uncertainty, and the volatility of the blockchain-based asset.
In Section~\ref{sec:theory}, we showed lower perceived expected losses due to default risk render inventory strategies more attractive such that arbitrageurs decide to store funds under the custody of the CEX. 
As a result, order flow should chase cross-exchange price differences as long as these are below marginal costs for arbitrageurs. 

We first collect Bitcoin order book data to investigate price differences across a large sample of CEXes at high frequencies and to estimate the spot volatility of Bitcoin. Second, we enrich our data with Bitcoin blockchain network data to quantify settlement latency.  
We use the parametrization to provide empirical evidence for each of the main implications of settlement latency for blockchain-based trading highlighted above. 

\subsection{Bitcoin Order Book Data} 

We gather order book information from the application programming interfaces (APIs) of the 16 largest CEXes in terms of trading volume in January 2018 that feature BTC versus US Dollar. We retrieve all open buy and sell limit orders for the first 25 order book levels on a minute interval from January 1, 2018, to October 31, 2019.

Table~\ref{tab:descriptives_orderbook} gives the corresponding exchanges and provides summary statistics of our sample period's underlying order book data. We observe a strong heterogeneity of exchange-specific liquidity. 
For instance, whereas investors could have traded BTC versus US Dollar at \emph{Coinbase Pro} with an average spread of 0.45~USD, the average quoted spread at \emph{Gatecoin} has been about 337~USD. 
For most exchanges, however, the relative bid-ask spreads are comparable to those from equity exchanges such as NASDAQ or NYSE, where relative spreads range from 5 basis points (bp) for large firms to 38 bp for small firms \citep[e.g.,][]{Brogaard.2014}. 

\begin{table}[t!]
	\caption{Descriptive statistics of the order book sample}\label{tab:descriptives_orderbook}
	\centering
 \resizebox{\columnwidth}{!}{%
\begin{tabular}{lrrrrrrlllll}
  \toprule
  & Orderbooks & Spread (USD) & Spread (bp) & Taker Fee & With. Fee & Conf.  & Margin & Business & Region & Rating & USDT \\ 
  \midrule
Binance & 941,399 & 2.61 & 3.29 & 0.10 & 0.00100 & 2 & \cmark & \xmark & Other & A & \cmark \\ 
  Bitfinex & 938,703 & 0.62 & 0.74 & 0.20 & 0.08000 & 3 & \cmark & \cmark & Other & A & \xmark \\ 
  bitFlyer & 919,182 & 15.13 & 20.52 & 0.15 & 0.08000 &  & \cmark & \cmark & Japan & A & \xmark \\ 
  Bitstamp & 938,483 & 5.11 & 6.33 & 0.25 & 0.00000 & 3 & \xmark & \cmark & USA & A & \xmark \\ 
  Bittrex & 940,523 & 9.07 & 13.20 & 0.25 & 0.00000 & 2 & \xmark & \cmark & Other & B & \cmark \\ 
  CEX.IO & 936,378 & 11.73 & 15.07 & 0.25 & 0.00100 & 3 & \cmark & \cmark & UK & B & \xmark \\ 
  Gate & 907,874 & 81.24 & 90.92 & 0.20 & 0.00200 & 2 & \xmark & \xmark & USA &  & \cmark \\ 
  Gatecoin & 560,111 & 336.52 & 515.87 & 0.35 & 0.06000 & 6 & \xmark & \cmark & Other &  & \xmark \\ 
  Coinbase Pro & 941,539 & 0.45 & 0.54 & 0.30 & 0.00000 & 3 & \cmark & \cmark & USA & AA & \xmark \\ 
  Gemini & 912,944 & 2.57 & 3.40 & 1.00 & 0.00200 & 3 & \xmark & \cmark & USA & AA & \xmark \\ 
  HitBTC & 919,686 & 2.96 & 3.68 & 0.10 & 0.00085 & 2 & \xmark & \xmark & UK & B & \cmark \\ 
  Kraken & 936,970 & 2.63 & 3.24 & 0.26 & 0.00100 & 6 & \cmark & \cmark & USA & A & \xmark \\ 
  Liqui & 491,516 & 30.15 & 45.13 & 0.25 &  &  & \cmark & \xmark & Japan &  & \cmark \\ 
  Lykke & 918,768 & 44.04 & 57.95 & 0.00 & 0.05000 & 3 & \xmark & \xmark & Europe & D & \xmark \\ 
  Poloniex & 916,876 & 5.38 & 7.51 & 0.20 &  & 1 & \cmark & \xmark & Other & A & \cmark \\ 
  xBTCe & 887,289 & 13.34 & 17.87 & 0.25 & 0.00300 & 3 & \cmark & \xmark & Europe &  & \xmark \\ 
   \bottomrule
\end{tabular}}
	\subcaption*{\emph{Notes:} This table reports descriptive statistics of order book data used in our study. We gather high-frequency order book information of 16 exchanges by accessing the public application programming interfaces (APIs) every minute. \emph{Order Books} denotes the number of successfully retrieved order book snapshots between January 1, 2018, and October 31, 2019. \emph{Spread (USD)} is the average quoted spread in USD, \emph{Spread (bp)} is the average spread relative to the quoted best ask price (in basis points). \emph{Taker Fee} are the associated trading fees in percentage points relative to the trading volume. \emph{With. Fee} are the withdrawal fees in BTC. \emph{Conf.} refers to the number of blocks the exchange requires to consider incoming transactions valid. Empty cells indicate missing values. \emph{Margin} refers to the existence of BTC shorting instruments at the exchange. \emph{Business} indicates whether the exchange allows business accounts and hence access for institutional investors. \emph{Region} denotes the main regional area of business activity of the CEXes, based on the classification in \cite{Makarow.2018}. \emph{Rating} are November 2019 exchange ratings from data provider \emph{cryptocompare} (highest grade: AA, lowest grade: E). \emph{USDT} indicates if exchanges quote BTC against Tether or against USD directly. }
\end{table}

The exchanges also exhibit substantial heterogeneity in terms of trading-related characteristics.
Taker fees range from 0\% on \emph{Lykke} to 1\% on \emph{Gemini}. 
Other potential transaction costs are withdrawal fees that have to be paid upon the transfer of BTC from the exchange to any other exchange or private wallet address. 
Exchanges charge up to $0.003$~BTC for withdrawal requests, corresponding to roughly 30~USD in prices as of January 2018, irrespective of the withdrawn amount. 
Furthermore, exchanges have different requirements concerning the number of block confirmations before processing BTC deposits. 
For instance, \emph{Kraken} requires that incoming transactions must be included in at least six blocks. 
These requirements aim to reduce the possibility of an attack that aims at revoking previous transactions, i.e., a so-called 'double-spending attack'. 
In such a scenario, a potential attacker has to alter all blocks containing the corresponding transaction. The probability that an attacker catches up with the honest chain decreases exponentially with the number of blocks the attacker has to alter \citep{Nakamoto.2008}. 
As we discuss below, these requirements confront arbitrageurs with a mechanical increase in the settlement latency. 

We collect information about two exchange characteristics that might reduce marginal arbitrage costs. 
On the one hand, some exchanges offer margin trading instruments that allow traders to take short positions on BTC. However, such margin trading always comes at the cost of substantial collateral deposits, which the exchanges control. On the other hand, some exchanges allow businesses to open an account, which allows institutional investors, who might have lower risk aversion, to hold inventories and exploit price differences. However, holding inventories at exchanges is costly since it is associated with continuous exposure to exchange-specific default risk. 
We demonstrate in Section~\ref{sec:regressions} that the mere presence of margin trading instruments or access for institutional investors is not a sufficient condition to offset the impact of settlement latency.
Further, \cite{Makarow.2018} find that price deviations are much larger across than within countries, indicating the importance of capital controls for the movement of arbitrage capital. To control for such constraints, we group the exchanges in our sample into geographical regions according to the classifications in \cite{Makarow.2018}.  

Further, as one proxy for perceived expected losses due to the default risk of the CEX, we collect CEX ratings from data provider \emph{cryptocompare}. The ratings were created in November 2019 and are based on evaluating the business along a range of dimensions, including data provision, security, trade monitoring, and potential negative reports about the CEXes' activities. The ratings range from top (\emph{AA}) to dubious assessments (\emph{E}). Two exchanges in our sample were rated with the top grade \emph{AA}. 

Finally, some exchanges do not feature fiat currencies directly. However, they allow trading BTC against Tether, a token backed by one US Dollar for each token and trading close to par. In response to the results documented in \cite{Griffin.2020} substantial doubts on the backing of Tether by the US Dollar arose. We show in the Appendix that our empirical results remain qualitatively unaffected if we adjust for contemporaneous price deviations between the USD and USDT (Tether). 
 
\subsection{Bitcoin Network Data} \label{sec: bitcoin data}

To quantify the settlement latency for Bitcoin, we gather transaction-specific information from \href{https://www.blockchain.com/}{blockchain.com}, a popular provider of Bitcoin network data. We download all blocks validated between January 1, 2018, and October 31, 2019, and extract information about all validated transactions on the blockchain in this period. 
Each transaction contains a unique identifier, a timestamp of the initial announcement to the network, and, among other details, the fee (per byte) the transaction initiator offers validators to validate the transaction.\footnote{The fee per byte is more relevant than the total fee associated with a transaction as block sizes are limited in terms of bytes. In principle, a transaction can have multiple inputs and outputs, i.e., several addresses involved as senders or recipients of a transaction, which increases the number of bytes.}
 
Any transaction in the Bitcoin network, irrespective of its origin, must go through the so-called mempool, a collection of all unconfirmed transactions.
These transactions wait until they are picked up by validators and get validated. 
The difference between the timestamp of the first announcement of the block, which contains the broadcasted transaction, and the timestamp of the transaction's initial announcement constitutes the settlement latency. 
The size of the mempool thus reflects the number of transactions that wait for confirmation. We retrieve the minute-level number of transactions in the mempool from data provider blockchain.com. 
By design, the Bitcoin protocol restricts the number of transactions that can enter a single block. 
This restriction induces competition among the originators of transactions who can offer higher {settlement fees} to make it attractive for validators to include transactions in the next block.
Consequently, transactions with no or very low settlement fees may not attract validators and thus stay in the mempool until they eventually become validated. Relaxing this artificial supply constraint might reduce issues pertaining to settlement latency but at the cost of reduced network security \citep[see, e.g., ][]{Hinzen.2022}.

Validators bundle transactions that wait for verification and try to solve a computationally expensive problem that involves numerous attempts until the first validator finds the solution. 
By design of the Bitcoin protocol, validators successfully find a solution and append a block on average every 10 minutes (during our sample period, new blocks are announced to the network on average every 9.7 minutes).
Settlement latency should not be confused with the time it takes until a new block is mined. 
Even though the expected block validation time is 10 minutes, it is ex-ante uncertain when a transaction is included in a block for the first time. 
The number of outstanding transactions serves as a proxy for fluctuations in congestion of the Bitcoin network.
Whereas on average, 1,644 transactions have been included per block in our sample period, the average number of transactions in the mempool is above 10,000, with temporarily more than 41,000 transactions waiting for validation.
For any transaction, this induces uncertainty in the settlement latency.
We confirm below that the probability of being included in the next block decreases with the number of transactions that wait for settlement and increases with the settlement fee the investor is willing to pay. 

Table~\ref{tab:summary_raw_mempool} provides summary statistics of the recorded transactions. The average settlement fee per transaction is about 2~USD. 
The distribution of fees exhibits a strong positive skewness with a median of 0.28~USD. 
The average waiting time until the validation of a transaction is about 41 minutes, while the median is about 8.8 minutes. 

\begin{table}[t!]
	\caption{Descriptive statistics of transactions in the Bitcoin network}\label{tab:summary_raw_mempool}
	\centering
	\begin{adjustwidth}{-0.6cm}{}
		\begin{footnotesize}
\begin{tabular}{lrrrrrrr}
  \toprule
  & Mean & SD & 5 \% & 25 \% & Median & 75 \% & 95 \% \\ 
  \midrule
Fee per Byte (in Satoshi) & 47.41 & 183.08 & 1.21 & 5.00 & 14.06 & 45.52 & 200.25 \\ 
  Fee per Transaction (in USD) & 1.98 & 24.19 & 0.02 & 0.09 & 0.28 & 1.12 & 7.54 \\ 
  Latency (in Min) & 41.03 & 289.26 & 0.73 & 3.55 & 8.82 & 20.75 & 109.52 \\ 
  Mempool Size (in 1000) & 10.02 & 14.88 & 0.43 & 1.81 & 4.50 & 11.06 & 41.88 \\ 
  Transaction Size (in Bytes) & 507.28 & 2174.13 & 192.00 & 225.00 & 248.00 & 372.00 & 958.00 \\ 
   \bottomrule
\end{tabular}
		\end{footnotesize}
	\end{adjustwidth}
	\subcaption*{\emph{Notes:} This table reports descriptive statistics of the Bitcoin transaction data used in our study. The sample contains all transactions validated in the Bitcoin network from January 1, 2018, to October 31, 2019. Our sample comprises 139,704,737 transactions that are validated in 99,129 blocks. \emph{Fee per Byte} is the total fee per transaction divided by the size of the transaction in bytes in Satoshi, where 100,000,000 Satoshi are 1 Bitcoin. \emph{Fee per Transaction} is the total settlement fee per transaction (in USD). We approximate the US Dollar price by the average minute-level midquote across all exchanges in our sample. \emph{Latency} is the time until the transaction is either validated or leaves the mempool without validation (in minutes). \emph{Transaction Size} denotes the size of a transaction in bytes. \emph{Mempool Size} is the number of other transactions in the mempool when a transaction of our sample enters the mempool.}	
\end{table}

\subsection{Price Differences Across Exchanges} \label{sec: pricedifferences}

To provide systematic empirical evidence on the extent of violations of the law of one price, we compute the observed instantaneous cross-exchange price differences, adjusted for transaction costs, of all $120$ exchange pairs  (with the total number of exchanges $N=16$), defined as 
\begin{small}
	\begin{align}\label{eq:tildeDelta}
	\tilde \varDelta_t :=\begin{pmatrix}
	0 &\cdots & \tilde\delta_t^{N, 1}\\
	\vdots& \ddots & \vdots\\
	\tilde\delta_t^{1, N}& \cdots& 0 \\
	\end{pmatrix}=\begin{pmatrix}
	0 &\cdots & \tilde b_{t}^{1}\left(q_t^{N, 1}\right) - \tilde a_{t}^{N}\left(q_t^{N, 1}\right)\\
	\vdots& \ddots & \vdots\\
	\tilde b_{t}^{N}\left(q_t^{1, N}\right) - \tilde a_{t}^{1}\left(q_t^{1, N}\right)& \cdots& 0 \\
	\end{pmatrix},
	\end{align}
\end{small}
\noindent where $\tilde b_t^i(q_t^{i,j})$ is the transaction cost adjusted log bid price of $q_t^{i, j}$ units of the asset on exchange $i$ at time $t$ and $\tilde a_t^i(q_t^{i,j})$ is the transaction cost adjusted log ask price of $q_t^{i, j}$ units of the asset. 

In line with our definition in Section~\ref{sec:trcosts}, transaction costs are proportional to the trading quantity. 
We choose $q_t^{i, j}$ as the quantity that maximizes the resulting return for the exchange pair $i$ and $j$ given the prevailing order books at the time $t$, the taker fees of exchanges $i$ and $j$ and withdrawal fees of exchange~$j$.
Accordingly, we account for proportional exchange-specific taker fees (as reported in Table~\ref{tab:descriptives_orderbook}), which increase the average ask price and decrease the average bid price. 
We then use the resulting transaction cost-adjusted order book queues and apply a grid search algorithm to identify the trading quantity that maximizes the total return for each exchange pair. 
As a last step, we check if the resulting trading quantity exceeds the withdrawal fee that the low-price exchange charges for outgoing transactions (see Table~\ref{tab:descriptives_orderbook}). If the optimal trading quantity is below the withdrawal fee, we set the trading quantity to zero.
This data-driven approach thus mimics the strategy of an arbitrageur who aims to maximize profits by optimally accounting for the prevailing order book depth and other trading-related fees. 
As price differences can only be positive in one trading direction, we set negative price differences to zero as such scenarios (even without latency) do not correspond to arbitrage opportunities. 
The resulting matrix of price differences thus contains only non-negative values. 

Figure~\ref{fig:price_differences} shows the average price differences for each exchange pair.
The heatmap shows that some exchanges exhibit quotes that tend to deviate quite systematically from (nearly) all other exchanges. 
For instance, \emph{Bitfinex}, \emph{CEX.IO}, \emph{Gatecoin}, and \emph{HitBTC} quote on average higher bid prices than most other exchanges and thus exhibit large price differences when used as a high-price exchange. 
Conversely, other exchange pairs do not feature large average price differences. 
For instance, there are hardly any price differences whenever \emph{Coinbase Pro} or \emph{Kraken} serve as high-price exchanges. 
We report price differences similar to among others \cite{Kroeger.2017}, \cite{Choi.2018}, \cite{Makarow.2018}, and \cite{Borri.2021}. 
Consistently, we find that price differences have a tendency to be lower across exchange pairs active within the same geographical region and for exchange pairs quoting either exclusively BTC versus USD or exclusively BTC versus USDT. 

\begin{figure}[t!]
	\caption{Price differences between exchanges}\label{fig:price_differences}
	\centering	
	\includegraphics[width=0.9\textwidth]{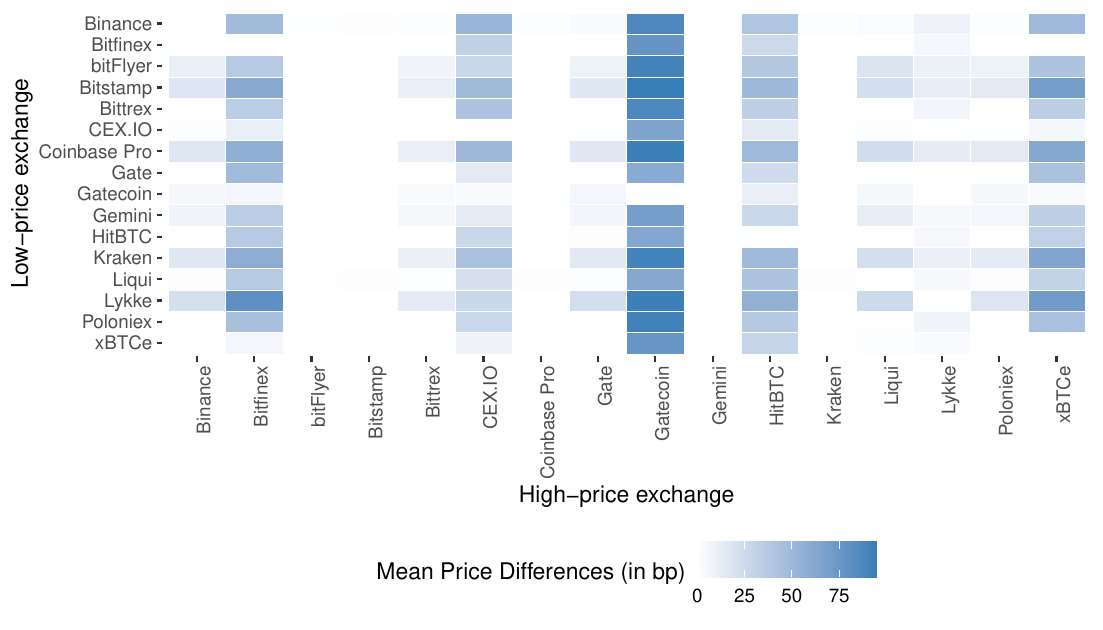}	
	\subcaption*{\emph{Notes:} The heatmap shows the average price differences, adjusted for transaction costs, $\tilde\delta_t^{b, s}$, across time for each exchange pair in our sample. Price differences are based on the minute-level transaction cost-adjusted bid and ask prices for each exchange according to Equation \eqref{eq:tildeDelta}. We account for exchange-specific taker fees according to Table~\ref{tab:descriptives_orderbook} and compute the quantity that maximizes each exchange pair's return using a grid search algorithm. The darker the color, the higher the average price difference through our sample period in the specific exchange pair. White or very light colors indicate that there are, on average, no or few price differences for a specific exchange pair.}	
\end{figure}

\section{Expected Costs Due to Settlement Latency} \label{sec: results}

Spot volatility $\sigma_t$, the expected settlement latency, and the variance of settlement latency are the central parameters of our theoretical framework in Section~\ref{sec:methodology}. We estimate these values based on our Bitcoin order book and network data.

\subsection{Spot Volatility}

To estimate the spot volatility, we follow the approach of \cite{Kristensen.2010}. For each exchange $s$ and minute $t$, we estimate $\left({\sigma}^s_{t}\right)^2$ by
\begin{equation}\label{eq:spotvola}
	\widehat{\left({\sigma}^s_{t}\right)}^2(h_T) = \sum\limits_{l=1}^\infty K\left(l - t, h_T\right)\left(b^{s}_{l} - b^{s}_{l-1}\right)^2,
\end{equation}
where $K\left(l - t, h_T\right)$ is a one-sided Gaussian kernel smoother with bandwidth $h_T$ and $b^s_{l}$ corresponds to the quoted bid price on the exchange $s$ at minute $l$. 
The choice of the bandwidth $h_T$ involves a trade-off between the variance and the bias of the estimator. 
Considering too many observations introduces a bias if the volatility is time-varying, whereas shrinking the estimation window through a lower bandwidth results in a higher variance of the estimator. 
\cite{Kristensen.2010} thus proposes to choose $h_T$ such that information on day $T-1$ is used for the estimation on day $T$. Formally, the bandwidth on any day of our sample is the result of minimizing the Integrated Squared Error of estimates on the previous day, i.e., 
\begin{align}
	h_T = \arg\min _{h>0}\sum\limits_{l=1}^{1440}\left[\left(b^{s}_{l} - b^s_{l-1}\right)^2 - \widehat{\left({\sigma}^s_{l}\right)}^2(h) \right]^2,
\end{align}
where $l$ refers to the minutes on day $T-1$ and $\widehat{\left({\sigma}^s_{l}\right)}^2(h)$ is the spot variance estimator for minute $l$ on day $T-1$ based on bandwidth $h$. 

For each exchange, we trim the distribution of all estimates at 1\% on both tails to eliminate outliers (e.g.,~due to  flickering quotes). 
Since the underlying asset is identical, the resulting estimates---as expected---do not differ substantially across exchanges. 
The average minute-level volatility across exchanges is about 0.09\%, which translates into daily volatility of about 3.4\%, significantly higher than the average daily volatility of the S\&P~500 index during the same period, which yields roughly 0.65\%.\footnote{We convert minute-level estimates to the daily level by multiplying it with the square root of the number of minutes on any given trading day, i.e.,  $\sqrt{1440}$.}

\subsection{Latency Prediction} \label{sec:latency}

We use all validated transactions to quantify the settlement latency of the Bitcoin blockchain. 
In line with \cite{Chiu.2018} and \cite{Easley.2017}, we expect that transaction fees and mempool congestion play an important role in the determination of the expected time until validation. 
Accordingly, we employ a Gamma regression, where the conditional probability density function of latency $\tau_i$ with rate parameter $\beta_i$ and shape parameter $\alpha_T$ is given by
\begin{align}\label{eq:duration_model_gamma}
	\pi(\tau_i|\theta_T)=\frac{\beta_i^{\alpha_T}}{\Gamma\left(\alpha_T\right)}\tau_i^{\alpha_T-1}e^{-\beta_i\tau_i},
\end{align}
where
\begin{align}
	\theta_T :=(\theta_T^{\beta}, \alpha_T)'\in\mathbb{R}^k \text{ and }\beta_i&=\exp(-x_{i}'\theta_T^{\beta}), \alpha_T > 0.
\end{align}
Here, $x_{i}\in\mathbb{R}^K$ includes an intercept and denotes (pre-determined) covariates driving $\tau_i$, $\theta_T^{\beta}\in~\mathbb{R}^K$ denotes the corresponding vector of parameters and $\Gamma\left(x\right):=\int_{\mathbb{R}_+}z^{x-1}e^{-z}dz$ is the Gamma function. 
The Gamma distribution collapses to an exponential distribution for $\alpha_T=1$. 
We estimate the parameter vector $\theta_T$ using all validated transactions on the day $T-1$ via maximum likelihood, both with and without covariates. 
In addition, we estimate an exponential model by fixing $\alpha_T=1$.
As covariates $x_i$, we include settlement fees and the (log) size of the mempool. 
The settlement fees enter as \emph{fees per byte} as the relevant metric for validators who face a restriction in terms of the maximum size of a block in bytes. 
Blockchain congestion, i.e., the number of transactions waiting for validation at the time when a transaction is announced, serves as a proxy for competition among transactions. 

In Table~\ref{tab:tab_exponential_model_fit}, we provide summary statistics of the estimated parameters. The numbers in the brackets denote the 5\% and 95\% quantiles of the time series of estimated parameters. 
The marginal effect of settlement fees is statistically significant and has the expected sign for nearly all days, i.e., higher fees predict a lower latency. 
The mempool size exhibits a positive impact on latencies through our sample period, i.e., congestion of the mempool decreases the probability of inclusion of a transaction in the next block \citep[see, e.g.,][]{Huberman.2021, Easley.2017}. 
A likelihood ratio test against a model without covariates indicates that the regressors are jointly highly significant. We, therefore, find clear evidence that the waiting time until a transaction enters the next block of the blockchain is predictable. 
We moreover find that the exponential distribution is rejected in favor of the more general Gamma distribution in nearly 93\% of all days. 

To predict the (conditional) moments of the latency distribution while avoiding any look-ahead bias, we use the estimated parameter $\hat{\theta}_{T}$ based on transactions from day $T-1$ to parameterize the latency distribution for every minute $t$ of day $T$. We provide further evidence for the predictability of settlement latency by computing the in-sample as well as out-of-sample root mean square prediction errors (MSPEs). In particular, for the in-sample MSPE, we use all transactions that feed into the estimation of $\hat{\theta}_{T}$ (i.e., all transactions validated on day $T-1$). The out-of-sample MSPE is based on predictions for all transactions validated on day $T$ using the estimated parameter vector $\hat{\theta}_{T}$. We find that the in-sample MSPE is, on average, smaller for the unrestricted model specifications and that the unrestricted models exhibit, on average, a lower out-of-sample MSPE compared to their restricted counterparts. As a consequence, we predict the latency using the unrestricted Gamma model. 

Accordingly, the conditional mean and variance of settlement latency $\tau$, induced by a transaction at minute $t$ on day $T$ with characteristics $x_t$, is given by
\begin{align}\label{eq:latency_moments}
\widehat{\mathbb{E}}_{t}\left(\tau\right) &= \hat\alpha_{T}\exp {(x'_t\hat{\theta}_{T}^\beta)}, \qquad\text{and }\qquad \widehat{\mathbb{V}}_{t}\left(\tau\right) = \hat\alpha_{T}\exp {(2x'_t\hat{\theta}_{T}^\beta)}, 
\end{align}
where $x_t$ consists of the mempool size and the fee an arbitrageur is willing to pay at time~$t$. While the mempool size is observable at any point in time, we use the optimal fee as derived in Lemma~\ref{lemma:optimal_q_f} as a proxy for the individually chosen settlement fees.
 
\begin{table}[t!]
	\centering
	\caption{Parameter estimates for the latency prediction models}\label{tab:tab_exponential_model_fit}
	\begin{footnotesize}
		\begin{tabularx}{\textwidth}{lcccc}
			\toprule
			                                     &   \multicolumn{2}{c}{Exponential}    &       \multicolumn{2}{c}{Gamma}       \\
			\cmidrule(lr){2-3}\cmidrule(lr){4-5} & W/o Covariates  &   W/ Covariates    &  W/o Covariates  &   W/ Covariates    \\
			\midrule
			Intercept                            &      3.31       &        1.41        &       3.86       &        1.19        \\
			                                     & {[2.510, 4.246]} &  {[-0.070, 3.675]}  & {[2.626, 5.250]}  &  {[0.013, 2.596]}  \\
			$\alpha$                             &                 &                    &       0.62       &        0.63        \\
			                                     &                 &                    & {[0.358, 0.902]} &   {[0.365, 0.900]}   \\
			Fee per Byte                         &                 &       -0.22        &                  &       -0.22        \\
			                                     &                 & {[-0.486, -0.031]} &                  & {[-0.501, -0.031]} \\
			Mempool Size                         &                 &        0.23        &                  &        0.31        \\
			                                     &                 & {[-0.043, 0.452]}  &                  &  {[0.059, 0.530]}   \\
			\midrule
			LR (Covariates)                      &      91.33      &                    &      74.59       &                    \\
			LR (Gamma vs. Exponential)           &      92.68      &                    &                  &                    \\
			\midrule
			MSPE (In-Sample)                     &      65.67      &       65.74        &      65.67       &       66.02        \\
			MSPE (Out-of-Sample)                 &      70.97      &       70.81        &      70.97       &       70.55        \\
			\bottomrule
		\end{tabularx}
	\end{footnotesize}
	\subcaption*{\emph{Notes: } This table reports summary statistics for the estimated parameters of the Gamma duration model  given by Equation \eqref{eq:duration_model_gamma}. \emph{Fee} denotes fee per byte and \emph{Mempool Size} refers to the number of unconfirmed transactions in the mempool. We estimate each model for each day in our sample, where  we consider all transactions validated on a particular day. We report the time-series averages of the estimated parameters. Values in brackets correspond to the 5\% and 95\% percent quantiles of the estimated parameters. \emph{LR (Covariates)} summarizes likelihood ratio tests of the corresponding unrestricted duration model \text{with} covariates against the restricted model \emph{without} covariates. \emph{LR (Gamma vs. Exponential)} summarizes  likelihood ratio tests of the Gamma duration model against the exponential specification. The reported values denote the percentage of days where the null hypothesis that the likelihood of the more general model equals the likelihood of the restricted model is rejected at the 95\% significance level. \emph{MSPE} refers to the mean squared prediction error for out-of-sample and in-sample tests, respectively.}
 \end{table}

We cannot reject the null hypothesis that the correlation between volatility and expected latency is significantly different from zero, which suggests that settlement latency constitutes a source of risk that is not captured by price fluctuations. 
In other words, periods of high spot volatility $\left({\sigma}^s_{t}\right)^2$ are not driven by high cross-exchange asset flows. Instead, settlement latency seems to be primarily driven by intraday fluctuation patterns of high network activity which are not necessarily related to cross-exchange arbitrage activity. For instance, \cite{Sokolov.2021} finds that network activity spikes during periods of ransomware attacks which increases network fees and thus, according to our theoretical framework, renders cross-exchange arbitrage activity more expensive. 

\subsection{Estimation of Arbitrage Bounds}

Having estimated spot volatilities $\left({\sigma}^s_{t}\right)^2$ as well as the first two moments of the settlement latency distribution, $\widehat{\mathbb{E}}_{t}\left(\tau\right)$ and $\widehat{\mathbb{V}}_{t}\left(\tau\right)$, we analyze the contribution of these components to the arbitrage bounds $d_t^s$. 
For that purpose, we estimate the arbitrage bounds according to our theoretical framework in Section~\ref{sec:methodology} and delineate $\hat d_t^s$ into individual components.

Based on the empirically relevant CRRA case of Lemma~\ref{lemma:optimal_strategy_iso}, the estimated arbitrage bounds $\hat{d}^s_{t}$ at minute $t$ are given by 
\begin{align}\label{equ:estimation_bounds}
	\hat{d}_t^s &= \frac{1}{2}\hat{\sigma}^s_t\sqrt{\gamma m_1+\sqrt{\gamma^2m_1^2+2\gamma(\gamma+1)(\gamma+2)m_2}},
\intertext{with}
	m_1 &= \widehat{\mathbb{E}}_{t}\left(\tau\right)  + \widehat{\mathbb{E}}_{t}\left(\tau_B\right)\cdot(B^s-1), \label{eq:c1} \\\label{eq:c2}
	m_2 &=\widehat{\mathbb{V}}_{t}\left(\tau\right) + \widehat{\mathbb{V}}_{t}\left(\tau_B\right)\cdot \left(B^s -1\right)^2  +  \left(\widehat{\mathbb{E}}_{t}\left(\tau_B\right)\cdot(B^s-1)+\widehat{\mathbb{E}}_{t}\left(\tau\right)\right)^2,
\end{align}
where $\hat{\sigma}^s_t$ denotes the square root of the estimated spot volatility on the high-price exchange, and $\widehat{\mathbb{E}}_{t}\left(\tau\right)$ and $\widehat{\mathbb{V}}_{t}\left(\tau\right)$ denote the estimated conditional mean and variance of the latency distribution, respectively.
Moreover, $B^s$ refers to the number of blocks the high-price exchange $s$ requires before considering incoming transactions valid (see Table \ref{tab:descriptives_orderbook}). 
This exchange-specific security requirement thus further increases the settlement latency beyond the waiting time until a transaction's validation in the first block.\footnote{\emph{bitFlyer} and \emph{Liqui} do not report a minimum number of confirmations. 
They rather use a discretionary system depending on the individual transaction and the state of the network. 
In this case, we assume the number of confirmations to be equal to the median across all exchanges that provide such information, which is 3.} 

We thus decompose the settlement latency into two components: the time it takes until a transaction is included in the blockchain (i.e., the first block), $\tau$, and the additional time $\tau_B$ until exchanges accept the transaction as de facto being immutable. 
While $\tau$ is partially under the control of the arbitrageur by her choice of the settlement fee, the validation time of subsequent blocks is exogenous. 
In fact, we do not find evidence against non-zero autocorrelation in waiting times and constant volatility in the block validation time. 
This evidence supports the notion that the validation times of blocks are partially under the control of the Bitcoin network and are internally impaired by the computational complexity of the underlying cryptographic problem. 
As a result, we can safely assume that the waiting times between subsequent blocks after the first one, which includes the current transaction, are independently and identically distributed. 
As validators append a new block on average every 9.7 minutes in our sample, we use this magnitude as the best-possible prediction of the time between two subsequent blocks, $\widehat{\mathbb{E}}_{t}\left(\tau_B\right)$. Accordingly, $\widehat{\mathbb{V}}_{t}\left(\tau_B\right)$ denotes the (sample) variance of the time between two consecutive blocks.

We fix the coefficient of risk aversion to $\gamma=2$ and estimate $\hat{d}_t^s$ for each exchange on a minute level.\footnote{Our estimation follows \cite{Conine.2017}, who estimate an average coefficient of relative risk aversion of about 2 over an extensive sample period.} Table~\ref{tab:efficiency_bounds} gives summary statistics of the resulting time series of arbitrage bounds due to settlement latency. We observe that the estimated bounds range, on average, between 91 bp and 197 bp. We acknowledge that the choice of the risk aversion coefficient determines the level of the estimated arbitrage bound $\hat{d}_t^s$. 
In that sense, interpreting the magnitude of the arbitrage bound itself is challenging. 
However, instead of only interpreting $\hat{d}_t^s$ in relation to observed price differences, we can focus on the relative importance of the core components that determine $\hat{d}_t^s$ in Equation~\eqref{equ:estimation_bounds}.

\begin{table}[t!]
	\centering
	\caption{Summary of exchange-specific arbitrage bounds}\label{tab:efficiency_bounds}	
	\begin{footnotesize}
		\begin{tabularx}{\textwidth}{lrrrrrrrrr}
  \toprule
  & Mean & SD & 5\% & 25\% & Median & 75\% & 95\% & Security & Uncertainty \\ 
  \midrule
Binance & 114.75 & 318.76 & 24.35 & 42.10 & 68.92 & 125.59 & 320.28 & 13.54 & 41.53 \\ 
  Bitfinex & 117.22 & 299.25 & 18.89 & 42.47 & 73.26 & 136.19 & 324.19 & 23.98 & 40.85 \\ 
  bitFlyer & 130.85 & 317.68 & 33.02 & 57.07 & 86.62 & 145.18 & 333.88 & 24.09 & 40.72 \\ 
  Bitstamp & 126.34 & 294.72 & 28.45 & 50.53 & 80.46 & 145.61 & 341.72 & 23.69 & 40.79 \\ 
  Bittrex & 129.03 & 277.80 & 30.94 & 57.25 & 89.41 & 143.51 & 333.37 & 14.32 & 41.63 \\ 
  CEX.IO & 120.84 & 286.39 & 29.46 & 52.72 & 81.69 & 136.05 & 305.50 & 24.44 & 40.60 \\ 
  Gate & 101.50 & 277.20 & 24.12 & 43.81 & 68.78 & 117.27 & 260.03 & 14.04 & 41.48 \\ 
  Gatecoin & 196.89 & 219.90 & 2.62 & 46.70 & 118.29 & 274.82 & 638.77 & 45.95 & 40.26 \\ 
  Coinbase Pro & 114.84 & 305.25 & 17.89 & 40.75 & 71.77 & 132.79 & 318.48 & 24.44 & 40.68 \\ 
  Gemini & 115.36 & 343.30 & 21.07 & 43.27 & 72.42 & 130.54 & 309.53 & 24.44 & 40.77 \\ 
  HitBTC & 101.22 & 287.97 & 19.10 & 37.64 & 62.72 & 112.79 & 273.14 & 14.14 & 41.36 \\ 
  Kraken & 135.07 & 271.66 & 25.37 & 54.09 & 91.53 & 164.15 & 357.11 & 41.86 & 40.50 \\ 
  Liqui & 90.79 & 60.20 & 23.51 & 49.96 & 77.40 & 115.62 & 201.88 & 28.97 & 39.98 \\ 
  Lykke & 133.43 & 379.31 & 18.58 & 44.51 & 80.57 & 150.73 & 381.17 & 25.21 & 40.61 \\ 
  Poloniex & 94.69 & 264.09 & 18.49 & 33.32 & 55.53 & 104.34 & 260.68 & 0.00 & 45.13 \\ 
  xBTCe & 106.16 & 246.56 & 19.90 & 40.74 & 70.58 & 131.44 & 281.96 & 24.15 & 40.78 \\ 
   \bottomrule
\end{tabularx}

	\end{footnotesize}
	\subcaption*{\emph{Notes:} This table provides descriptive statistics of estimated arbitrage bounds for each high-price exchange. We compute arbitrage bounds for a CRRA utility function with risk aversion coefficient $\gamma = 2$. We estimate the bounds using the spot volatility estimator of \cite{Kristensen.2010} and out-of-sample predictions of the conditional moments of the latency based on a Gamma duration model. We report all values in basis points (except otherwise noted). \emph{Security} gives the (percentage) contribution of the required number of confirmations to the median arbitrage bound. \emph{Uncertainty} corresponds to the (percentage) contribution of the uncertainty in latency to the median arbitrage bound.}
\end{table}

While the conditional moments of the latency distribution affect the time series variation of the arbitrage bounds, the cross-sectional variation is driven by the exchange-specific spot volatilities and the required number of confirmations, $B^s$. For instance, \emph{Gatecoin} and \emph{Kraken} require $B^s=6$ confirmations and produce, on average, the highest bounds, while \emph{Poloniex} requires only $B^s=1$ confirmation yielding the smallest median bound.
To quantify the effect of the exchange-specific security component $B^s$, we decompose the arbitrage bounds into the component resulting from the latency until a transaction is included in a block for the first time, $\tau$, and the component resulting from the waiting time until a transaction fulfills exchange-specific security requirements, $(B^s-1)\tau_B$. The second to last column in Table~\ref{tab:efficiency_bounds} gives the increase in the median arbitrage bound when we take the exchange-specific number of confirmations into account. The values correspond to the (percentage) difference between the median arbitrage bound as of Equation~\eqref{equ:estimation_bounds} and the respective bounds based on the assumption $B^s = 1$ for all exchanges. We observe that the impact of exchange-specific security components on arbitrage bounds is substantial and accounts, on average, for 23\% of the bounds.

Moreover, our theoretical framework allows us to directly analyze the relevance of the latency \emph{variance}. 
As the uncertainty of the arbitrageurs' returns increases with the variance of the settlement latency, we can compare the estimated arbitrage bounds to the (hypothetical) case of a \emph{deterministic} latency. 
The last column in Table~\ref{tab:efficiency_bounds} reports the percentage increase in arbitrage bounds when adjusting for the randomness in latency. 
The values correspond to the percentage difference between the median arbitrage bound and bounds based on the assumption  $\mathbb{V}_{t}(\tau) =\mathbb{V}_{t}(\tau_B) = 0$. 
We find that the impact of the latency volatility is substantial and accounts, on average, for 41\% of the arbitrage bounds. 

\section{Settlement Latency and Cross-Exchange Activity} \label{sec:regressions}

In this section, we empirically investigate the major predictions of our analysis in depth: 
Based on our findings in Section~\ref{sec:theory}, we hypothesize that (i) settlement latency is a costly friction for cross-exchange arbitrageurs, (ii) that mitigating CEX default risk reduces marginal cross-exchange arbitrage costs, and (iii) order flow chases arbitrage opportunities. The general representation of arbitrage costs due to settlement latency in Section~\ref{sec:methodology} forms the underlying base to account for transaction costs and the role of settlement fees in the validation process. More precisely, we reported, in line with existing literature, substantial price differences across CEXes and derived estimates of the relevant determinants of price risk related to settlement latency in Section~\ref{sec:data}. Next, we analyze the link between settlement latency and violations of the law of one price to document the economic relevance of settlement latency as a friction in blockchain-based trading. 

\subsection{Settlement Latency and Price Differences}

First, we investigate the relationship between the observed price difference and expected settlement latency, latency volatility, and spot volatility. 
The theoretical analysis in Section~\ref{sec:theory} yields that marginal costs for arbitrageurs increase with settlement latency. In turn, periods of high price risk for arbitrageurs due to settlement latency are consistent with persistent large observed price differences.

Table~\ref{tab:motivating_regression} provides the estimation results of linear regressions of hourly averages of cross-exchange price differences on exchange pair fixed effects (FE) and various regressors. 
In columns (1) and (2), we include the average estimated exchange-specific arbitrage bound or, alternatively, its underlying components, i.e., the average hourly spot volatility on the high-price exchange, the hourly median, and the variance of realized waiting times of transactions entering the mempool until being included in a block for the first time. 
Consistent with our theoretical framework, we find a statistically significant positive relationship between cross-exchange price differences and the components of the arbitrage bounds. 
Substituting the estimated arbitrage bounds by their components confirms that large price differences are consistent with periods of high price risk due to settlement latency. 

In columns (3) and (4), we interact the estimated arbitrage bounds with high-price exchange-specific dummy variables indicating whether the exchange offers margin trading instruments (\emph{Margin}) and access for institutional traders (\emph{Business Accounts}). 
We find that exchanges with margin trading are less sensitive to settlement latency but still yield a significant relationship between price differences and settlement latency. 
The finding indicates that the costs of margin trading for investors tend to exceed the risk-adjusted latency-implied price risk, presumably due to substantial margin requirements by CEXes, which, in turn, implies a higher default risk. 
Similarly, exchanges that feature access for institutional traders are less sensitive to arbitrage bounds, consistent with the notion that large institutions are more likely to exhibit a lower risk aversion than individual arbitrageurs.

In the last two columns of Table~\ref{tab:motivating_regression}, we control for the number of Bitcoins under the custody of a CEX as a proxy for trust. 
In Section~\ref{sec:theory}, we show that if investors perceive the default risk associated with exchanges as low, they should be willing to store more of their holdings directly under the custody of CEXes. 
Thus, the observed number of Bitcoins serves as a proxy for the perceived default risk of the CEX. 
We extract the number of Bitcoins under the control of wallets that the data provider \emph{glassnode} associates with each high-price exchange directly from the Bitcoin blockchain. 

The time series of aggregate inventories, measured as the number of Bitcoins, at CEXes exhibits an annualized aggregate growth of 13.4\%. 
Given the closing prices on the last day of our sample, USD 12.4 Billion worth of Bitcoins were under the custody of CEXes. 
The data thus indicate that the perceived default risk of CEXes decreased during our sample period. 
Note, however, that whereas some exchanges increased their inventory by large amounts (e.g., 400\% at Coinbase, 141\% at Binance, and 32\% at Bitstamp), some other exchanges face net inventory outflows in our sample period. 
For cross-exchange trading in such cases, settlement latency thus became more relevant due to the higher implied costs of inventory arbitrage strategies.
Consistent with our testable predictions, we find in Table~\ref{tab:motivating_regression} that cross-exchange price differences tend to be narrower between CEXes with more funds under custody. 

As a robustness check, column (1) of Table~\ref{tab:robust_regions} in Appendix~\ref{sec:appendix_robustness} provides an alternative specification to investigate the relevance of trust in CEXes as a proxy for the marginal costs of inventory arbitrage strategies. While we identify inventory under the control of a CEX as a suitable proxy for trust in Section~\ref{sec:theory}, exchange ratings may provide an alternative measure for the reliability and, thus, perceived default risks of CEXes. We find that arbitrage bounds interacted with a dummy for CEXes that received the best rating (AA) from data provider \emph{cryptocompare} in November 2019, yields results that are qualitatively similar to our main regression specification: Exchanges that are perceived as more trustworthy, i.e., that have higher ratings, exhibit lower price differences, which is in line with the notion that inventory arbitrage is less costly and thus attracts more arbitrage capital.

\begin{table}[ht]
	\caption{Price differences and sources of price risk}\label{tab:motivating_regression}
\centering
 \resizebox{\columnwidth}{!}{%

\begin{tabular}{lllllll}

\toprule
  \emph{Dependent Variable:} & \multicolumn{6}{c}{Price Differences (in \%)}\\
                               & (1)           & (2)            & (3)           & (4)            & (5)             & (6)\\  
   \midrule
   Arbitrage Bound (in \%)             & 4.993$^{***}$ &                & 5.178$^{***}$ & 6.078$^{***}$  & 3.193$^{***}$   &   \\   
                                       & (23.67)       &                & (21.11)       & (18.74)        & (14.50)         &   \\   
   Spot Volatility (in \%)             &               & 1.160$^{***}$  &               &                &                 & 0.7281$^{***}$\\   
                                       &               & (22.51)        &               &                &                 & (13.24)\\   
   Latency Median                      &               & 0.7004$^{***}$ &               &                &                 & 0.9760$^{***}$\\   
                                       &               & (5.212)        &               &                &                 & (6.833)\\   
   Latency Variance                    &               & 0.4306$^{***}$ &               &                &                 & 0.4371$^{***}$\\   
                                       &               & (6.640)        &               &                &                 & (6.348)\\   
   Arbitrage Bound $\times$ Margin Trading    &               &                & -0.3646       &                &                 &   \\   
                                       &               &                & (-0.9056)     &                &                 &   \\   
   Arbitrage Bound $\times$ Business Accounts &               &                &               & -1.847$^{***}$ &                 &   \\   
                                       &               &                &               & (-4.379)       &                 &   \\   
   Inventory                           &               &                &               &                & -0.1447$^{***}$ & -0.1505$^{***}$\\   
                                       &               &                &               &                & (-3.501)        & (-3.644)\\   
   \midrule
   Exchange Pair FE      & Yes           & Yes            & Yes           & Yes            & Yes             & Yes\\  
   High-Price Exchange FE                          & Yes           & Yes            & Yes           & Yes            &  No               & No \\  
   Observations                        & 1,416,752     & 1,416,752      & 1,416,752     & 1,416,752      & 1,416,752       & 1,416,752\\  
   \bottomrule
\end{tabular}

 }
\subcaption*{\emph{Notes:} This table provides OLS estimates based on a regression of hourly average exchange pair price differences and the main components of price risk due to settlement latency. \emph{Price Difference} is the average hourly exchange pair price difference (in percent). \emph{Spot Volatility} is the average hourly high-price exchange spot volatility estimate based on one-sided Gaussian kernel estimates \citep{Kristensen.2010}. \emph{Latency} denotes the (log) average hourly median (variance) of the waiting time (in minutes) of transactions entering the Bitcoin mempool. \emph{Arbitrage Bound} corresponds to the average hourly estimated arbitrage bound on the corresponding high-price exchange. \emph{Margin} is a dummy variable that indicates the availability of margin trading instruments on the high-price exchange, and \emph{Business Accounts} indicates whether the high-price exchange offers access for institutional investors. \emph{Inventory} is the average number of Bitcoins controlled by all wallets associated with the high-price exchange at hour $t-1$. We report $t$-statistics based on standard errors following \cite{Newey.1987} using 24 lags in parentheses. $^{***}, ^{**}$, and $^{*}$ indicate statistical significance on the $1\%, 5\%$ and $10\%$ levels (two-tailed), respectively.}
\end{table}

A well-established friction for cross-exchange cryptocurrency arbitrage are capital controls which may hamper arbitrage activity due to regulatory constraints. These constraints may prevent arbitrageurs from actively exploiting price differences on exchanges that are active in different countries. \cite{Makarow.2018} indeed document that price differences are smaller across exchange pairs that are active within the same geographic area. To shed further light on the relevance of settlement latency for the marginal costs of arbitrageurs, we decompose price differences into regional components that are based only on within-region cross-exchange pairs.
Table~\ref{tab:robust_regions} in the Appendix provides the regression results in columns (2) - (4). 
Even when we focus on price differences across exchange pairs that are exclusively active in the same region (only \emph{USA} (2), only \emph{Europe} (3), or \emph{all} exchange pairs where both exchanges are active within the same region (4)), price differences tend to be higher when settlement latency renders arbitrage more costly. Settlement latency thus plays an important role in understanding the dynamics of price differences, even when capital controls do not impose any costs for arbitrageurs. The magnitudes of the estimated coefficients, however,  indicate that price differences are less sensitive to arbitrage bounds within regions, confirming the relevance of capital controls to understanding the cross-section of price differences. 

In a similar vein, we find that settlement latency remains relevant beyond local fiat stability. While some exchanges do feature trading of BTC versus USD, others instead quote BTC versus USDT, a stablecoin which is, in theory, pecked to the US Dollar. Among others, \cite{Griffin.2020} documented unbacked digital money inflating cryptocurrency prices such that substantial doubts on the backing of Tether by the US Dollar arose. In columns~(5) - (7) in Table~\ref{tab:robust_regions} in the Appendix, we show that even if we restrict the sample in our regression setup to exchange pair price differences within CEXes that feature trading of BTC versus USDT, settlement latency remains a relevant channel which affects price differences. The same holds for the subset that just features CEXes with trading of BTC versus USD. Thus, the channel established in this paper seems robust to the presence of other well-documented frictions within cryptocurrency trading. 

\subsection{Settlement Latency and Cross-Exchange Flows}

In the last step of our analysis, we exploit the estimated arbitrage bounds to shed light on the relationship between cross-exchange price differences and transfers of assets between CEXes. 
Our theoretical analysis implies that during periods of large price differences, i.e., in periods where price differences likely exceed the arbitrage bounds, transfers of funds between CEXes should increase. 
 
We, therefore, extend our data by cross-exchange asset flows. 
Since exchanges are reluctant to provide the identity of their customers, it is virtually impossible to identify actual transactions by arbitrageurs. 
However, we take the overall transfer of assets between two different exchanges as a measure of the trading activity of cross-exchange arbitrageurs. 
For each exchange, we thus collect a list of addresses likely under the control of the exchanges in our sample.\footnote{We thank Sergey Ivliev for his tremendous support on this front.} 

Bitcoin transactions are pseudonymous in the sense that each transaction publicly reveals all addresses associated with the transaction. Still, it is hard to map these addresses to their respective physical or legal owners. 
Exchanges typically control many addresses to keep track of individual users' assets. 
However, algorithms are available that link addresses to certain exchanges \citep[e.g.,][]{Meiklejohn.2013, Foley.2018}. 
Usually, the matching procedure relies on observing an address advertised to belong to an exchange or actively sending small amounts of Bitcoin to exchanges. 
We gather 62.6 million unique exchange addresses, which allow us to identify 3.9 million cross-exchange transactions with an average daily volume of USD 72 million in our sample period.\footnote{We compute the average daily volume by extracting the hourly sum of net flows (inflows to an exchange minus the outflows in BTC) and multiplying it by the hourly average midquote across all exchanges.} 

Table~\ref{tab:flows} gives the estimates of a two-stage least squares regression of hourly cross-exchange net flows on hourly averaged cross-exchange price differences as well as exchange pair effects. 
The dependent variable is the sum of cross-exchange net flows into the corresponding high-price exchange.

We have to take into account that cross-exchange net flows and price differences are jointly determined, giving rise to a simultaneity problem. 
On the one hand, arbitrage activity is expected to increase with higher price differences (in excess of arbitrage bounds). 
On the other hand, price differences should decrease in response to arbitrage trades as arbitrageurs enforce adjustments towards the law of one price. 
We, therefore, instrument the price differences by the estimated arbitrage bounds (columns (2) and (4)) and, alternatively, by their respective components, i.e., the spot volatility, median settlement latency, and variance of realized latencies (columns (1) and (3)). 
These variables satisfy the two necessary conditions for the validity of an instrument. 
First, we find a positive correlation between price differences and arbitrage bounds after controlling for other exogenous variables (see Table~\ref{tab:motivating_regression}). 
Second, the only role arbitrage bounds play in influencing cross-exchange flows is through their effect on endogenous price differences.

\begin{table}[t!]
	\centering
	\caption{Cross-exchange flows and arbitrage opportunities}\label{tab:flows}
	\begin{footnotesize}
		\begin{tabularx}{\textwidth}{l*{4}{Y}}
    \toprule
   \emph{Dependent Variable} & \multicolumn{2}{c}{Exchange net inflows (in 100k USD)} & \multicolumn{2}{c}{Exchange net inflows (in BTC)} \\ \cmidrule(lr){2-3} \cmidrule(lr){4-5}
                         & (1)            & (2)            & (3)            & (4)\\  
   \midrule
   Price Differences (in \%)   & 0.0062$^{***}$ & 0.0058$^{***}$ & 0.0644$^{***}$ & 0.0551$^{***}$\\   
                               & (7.807)        & (6.091)        & (7.428)        & (5.717)\\   
   \midrule
   Exchange Pair FE & Yes            & Yes            & Yes            & Yes\\  
   High-Price Exchange FE     & Yes            & Yes            & Yes            & Yes\\  
   Observations                & 1,416,752      & 1,416,752      & 1,416,752      & 1,416,752\\  
   \bottomrule
		\end{tabularx}
	\end{footnotesize}	
  	\subcaption*{\emph{Notes:} This table provides the estimated marginal effects based on a two-stage least square regression of cross-exchange asset flows on price differences. \emph{Net inflows} is the difference between the average hourly inflows (in BTC) to the corresponding high-price exchange and the average hourly outflow from the high-price exchange per exchange pair. Inflows denoted in USD are based on the concurrent average high-price exchange midquote. 
	\emph{Price Differences} are the fitted values of the regression outlined in Table ~\ref{tab:motivating_regression} and denote the average hourly exchange pair price difference. 
    In columns (1) and (3), we instrument price differences with all components of arbitrage bounds. Columns (2) and (4) correspond to the estimation results, where we directly use the estimated arbitrage bounds as an instrument. We report $t$-statistics based on standard errors following \cite{Newey.1987} using 24 lags in parentheses. $^{***}, ^{**}$, and $^{*}$ indicate statistical significance at the $1\%, 5\%$ and $10\%$ levels (two-tailed), respectively.}
\end{table}
Throughout all specifications, we find a significant positive relationship between cross-exchange net flows into an exchange and (instrumented) price differences. Hence, the regression results indicate that cross-exchange flows increase in response to larger price differences triggered by larger arbitrage bounds. 
\section{Conclusions}\label{sec:conclusion}

Many market participants believe that blockchain technologies have the potential to radically transform the transfer of financial assets. 
Replacing trusted intermediaries and central clearing parties with blockchain technologies may increase efficiency and security \emph{and} may lower transaction costs. However, new frictions emerge for blockchain-based assets as the potential merits come at the cost of latency in the settlement process. 
The inability to perform instantaneous cross-exchange transactions implies limits to arbitrage as market participants cannot react sufficiently fast to potential violations of the law of one price.

We show that settlement latency implies limits to arbitrage as it is not incentive-compatible for risk-averse arbitrageurs to exploit cross-exchange price differences in periods with high volatility and long validation times. We formally derive the resulting no-trade price bounds for concave utility functions and a general class of latency distributions. 

Using data from the Bitcoin market in 2018 and 2019, we show that price differences remain large during periods of high spot volatility, settlement latency, and settlement latency uncertainty. 
Cross-exchange net flows chase price differences, which indicates that market participants perceive these restrictions. 

These results shed new light on the inherent trade-off between the costs and benefits of central clearing versus blockchain-based settlement. While central clearing counterparties take on counterparty risk to guarantee instantaneous trading on non-settled positions, blockchains render trusted intermediation obsolete. However, the degree of trustworthiness for blockchain-based assets depends on the validation process's complexity, which ultimately causes settlement latency. We document the substantial economic costs of latency-related trading frictions for blockchain-based assets.

To put arbitrage capital under an exchange's custody and rely on an exchange's protection against counterparty risk by providing collateral requires the trustworthiness of the exchange. 
Though we observe an increase in trust in exchanges, measured by the increase in funds under the custody of exchanges, our results indicate that intermediation services are still insufficiently utilized to exploit cross-exchange price differences. Frequent occurrences of centralized exchange defaults are a vivid reminder of these default risks.

In fact, we demonstrate that settlement latency remains a statistically and economically significant driving force of time-varying price differences, even when we control for exchange-specific inventory holdings and margin trading possibilities. These results indicate that circumventing settlement latency via alternative strategies is not sufficiently pervasive to offset the impact of arbitrage bounds completely. A possible reason is a lack of trust in the capabilities of CEXes to serve as central counterparties.

This paper thus contributes to an ongoing debate on the organization of clearing on financial markets and the role of third-party intermediation for reliable settlement systems. Our analysis demonstrates that a decentralized system cannot easily replace central clearing. Removing the frictions (and costs) induced by third-party intermediation cause novel trading frictions with non-trivial implications for pricing. 
First, limits to arbitrage implied by settlement latency may harm price efficiency, as the lower activity of arbitrageurs reduces the information flow across exchanges. 
Second, deviations from the law of one price affect the pricing of assets, as risk-neutral probabilities are not uniquely defined. 
Third, the implied costs of settlement latency depend on the design of the blockchain and should influence the decision of whether to migrate to a decentralized settlement system. 

\section{Data availability}\label{sec:data_availability}

All code is available on \url{https://github.com/voigtstefan/building_trust_takes_time}. All data can be retrieved from \url{https://doi.org/10.5061/dryad.q2bvq83rn}.

\FloatBarrier
%% REFERENCES
\newpage
\bibliography{references}

%% APPENDIX
\appendix
\renewcommand{\thesection}{\hspace{-2.5ex}}
\renewcommand{\thesubsection}{\Alph{subsection}}
\setcounter{table}{0}
\renewcommand{\thetable}{\Alph{subsection}\Roman{table}}
\setcounter{figure}{0}
\renewcommand{\thefigure}{\Alph{subsection}\arabic{figure}}
\setcounter{equation}{0}
\renewcommand{\theequation}{\Alph{subsection}\arabic{equation}}
\counterwithin*{equation}{subsection}

\section{Appendix}
\subsection{Latency Distribution Under Stochastic Volatility}\label{sec:appendix_clock_change}

We can relax the assumption that $\sigma_t^s$ is constant over the interval $[t, t+\tau]$ by allowing $\sigma_t^s$ to vary over time. More specifically, let $\sigma_t^s: \mathbb{R}_+ \rightarrow \mathbb{R}_+$ with $\theta(\tau):=\int\limits_t^{t+\tau}\left(\sigma_k^s\right)^2dk <~\infty \quad\forall \tau$, i.e., the volatility of the high-price exchange follows a (deterministic) path with bounded integrated variance. Assuming  $\mu_t^s=0$, we can then rewrite the log-returns of the arbitrageur for given latency $\tau$ as
\begin{equation}\label{eq:returns_dynamic_vola}
	r^{b, s}_{(t:t+\tau)} = \delta^{b, s}_{t} + \int\limits_t ^{t+\tau} \sigma^s_{k} dW^s_k.
\end{equation}
The integral above corresponds to a Gaussian process with independent increments. More specifically, we get
\begin{equation}
	\mathbb{E}_t\left(\left(r^{b, s}_{(t:t+\tau)} - \delta^{b, s}_{t}\right)^2\right) =  \theta(\tau) -\theta(0) = \mathbb{E}_t\left(W^s_{\theta(\tau)} - W^s_{\theta(0)}\right).
\end{equation}
In other words, the time-changed Brownian motion $W^s_{\theta(t)}$ has the same distribution as the log returns given in Equation \eqref{eq:returns_dynamic_vola} \cite[e.g.,][]{Durrett.1984, Barndorff-Nielsen.2002}. We can thus rewrite the return process as
\begin{equation}
	r^{b, s}_{(t:t+\tau)} = \delta^{b, s}_{t} + \int\limits_t^{t+\theta(\tau)} dW^s_k,
\end{equation}
The implications of Lemma \ref{lemma:characteristic_function} still hold. Still, we need to compute the moment-generating function of the transformed latency $m_{\theta(\tau)}(u)$, which depends on the latency distribution and the dynamics of the volatility process.
First, note that as $\theta(\tau)$ is strictly increasing, the probability integral transformation yields the distribution of $\tau(\theta)$, 
\begin{equation}
	\mathbb{P}_t\left(\theta(\tau) = y\right) = \mathbb{P}_t\left(\tau = \theta^{-1}\left(y\right)\right) \quad\forall y >0.
\end{equation}
Finally, the distribution of $\theta(\tau)$ is fully described via its characteristic function, which is of the form
\begin{equation}\label{eq:charact_func_general_vola}
	\varphi_{\theta(\tau)}\left(u\right)=\mathbb{E}_t\left(e^{i \theta(\tau)u}\right)
	=\frac{1}{2 \pi}  \int_{0}^{\infty} \int_{-\infty}^{\infty} \varphi_\tau\left(s\right) e^{-i s \tau } ds e^{i \theta(\tau) u}  d\tau. 
\end{equation} 
L\'{e}vy's characterization allows extending these ideas to more general non-deterministic integrands and stochastic time changes. 
Although Equation \eqref{eq:charact_func_general_vola} allows deriving theoretical arbitrage bounds based on Theorem~\ref{theorem:general} for every continuous local martingale, we restrict our analysis to analytically more tractable and intuitive dynamics of the price process and the associated settlement latency.  

\subsection{Proofs}\label{sec:appendix_proofs}

\begin{proof}[Proof of Lemma~\ref{lemma:characteristic_function}.]
	The proof of the lemma is an application of Equation (2.2) in \cite{Barndorff-Nielsen.1982}. 
\end{proof}

\begin{proof}[Proof of Theorem \ref{theorem:general}.]
	First, note that the characteristic function in Lemma \ref{lemma:characteristic_function} yields the first moment $\mu_r$ of the returns as
	\begin{align}\label{eq:expectation}
	\mathbb{E}_t\left(r_{(t:t+\tau)}^{b, s}\right) &= (-i)\frac{\partial}{\partial u}\varphi_{r_{(t:t+\tau)}^{b, s}}\left(u\right)\bigg\rvert_{u=0} \nonumber\\
	& = \delta_t^{b,s}e^{iu\delta_t^{b,s}}m_{\tau}\left(iu\mu^s_t-\frac{1}{2}u^2(\sigma^s_{t})^2\right)  \nonumber\\
	& \qquad + e^{iu\delta_t^{b,s}}m'_{\tau}\left(iu\mu^s_t-\frac{1}{2}u^2(\sigma^s_{t})^2\right)\left(\mu_t^s+iu(\sigma_t^s)^2\right)\bigg\rvert_{u=0} \nonumber\\
	& = \delta_t^{b,s}+ \mathbb{E}_t(\tau)\mu^s_t,
	\end{align}
	since $m_{\tau}(0)=1$ and $m'_{\tau}(0)=\mathbb{E}_t(\tau)$ by definition of the moment generating function.	
	
	In the spirit of \cite{Arditti.1967} and \cite{Scott.1980}, we express the expected utility of the arbitrageur by a Taylor expansion which results in a function of the higher-order moments of the return distribution. A Taylor expansion of a general utility function $U_\gamma(r)$ around the mean $\mu_r$ yields 
	\begin{align}
	U_\gamma\left(r_{(t:t+\tau)}^{b, s}\right) &= \sum\limits_{k=0}^\infty\frac{U_\gamma^{(k)}\left(\mu_r\right)}{k!}\left(r_{(t:t+\tau)}^{b, s} - \mu_r\right)^k,
	\end{align}
	where $U_\gamma^{(k)}\left(\mu_r\right) := \frac{\partial^k}{\partial\mu_r^k}U_\gamma\left(\mu_r\right)$.	Then, taking expectations yields
	\begin{align}\label{equation:CE_taylor_full}
	\mathbb{E}_t\left(U_\gamma\left(r_{(t:t+\tau)}^{b, s}\right)\right) = U_\gamma\left(\mu_r\right) +  \sum\limits_{k=2}^\infty\frac{U_\gamma^{(k)}\left(\mu_r\right)}{k!}\mathbb{E}_t\left(\left(r_{(t:t+\tau)}^{b, s} - \mu_r\right)^k\right).
	\end{align}
	Following \cite{Markowitz.1952}, we next consider a first-order Taylor expansion for the CE. We thus implicitly assume that the risk premium, $\mu_r-CE$, is small and that higher-order moments vanish:
	\begin{align}\label{equation:CE_taylor1} 
	\mathbb{E}_t\left(U_\gamma\left(r_{(t:t+\tau)}^{b, s}\right)\right) = U_\gamma\left(CE\right) = U_\gamma\left(\mu_r\right) + U_\gamma'\left(\mu_r\right)\left(CE - \mu_r\right).
	\end{align} 
	Moreover, the first-order Taylor expansion provides a convenient closed-form approximation of the certainty equivalent, which is linear in the moments of the return distribution. We obtain the equation in the theorem by equating \eqref{equation:CE_taylor_full} and \eqref{equation:CE_taylor1}, plugging in \eqref{eq:expectation} and solving for $CE$.
\end{proof}

\begin{proof}[Proof of Lemma \ref{lemma:optimal_strategy_iso}.]
	The proof follows directly from applying Theorem \ref{theorem:general} together with the derivatives of the isoelastic utility function, which yields
	\begin{equation}
	d_t^s -  \frac{1}{2}\frac{\gamma}{d_t^s}(\sigma_t^s)^2\mathbb{E}_t\left(\tau\right) - \frac{1}{8} \frac{\gamma(\gamma+1)(\gamma+2)}{(d_t^s)^3} (\sigma_t^s)^4\mathbb{E}_t\left(\tau^2\right) = 0.
	\end{equation}	
	Then, by Descartes' rule of signs, there is exactly one positive real root to the polynomial
	\begin{equation}
	(d_t^s)^4 - \frac{1}{2}\gamma(\sigma_t^s)^2\mathbb{E}_t\left(\tau\right)(d_t^s)^2 - \frac{1}{8} \gamma(\gamma+1)(\gamma+2) (\sigma_t^s)^4\mathbb{E}_t\left(\tau^2\right) = 0.
	\end{equation}
	All four solutions of the quartic polynomial are given by 
	\begin{equation}
	d_t^s=\pm\frac{1}{\sqrt{2}}\sqrt{\frac{\gamma}{2}(\sigma^s_t)^2\mathbb{E}_t\left(\tau\right)\pm\sqrt{\frac{\gamma^2}{4}(\sigma^s_t)^4\mathbb{E}_t\left(\tau\right)^2+\frac{\gamma(\gamma+1)(\gamma+2)}{2}(\sigma^s_t)^4\mathbb{E}_t\left(\tau^2\right)}}.
	\end{equation}
	However, since
	\begin{equation}
	\frac{\gamma}{2}(\sigma^s_t)^2\mathbb{E}_t\left(\tau\right)<\sqrt{\frac{\gamma^2}{4}(\sigma^s_t)^4\mathbb{E}_t\left(\tau\right)^2+\frac{\gamma(\gamma+1)(\gamma+2)}{2}(\sigma^s_t)^4\mathbb{E}_t\left(\tau^2\right)}
	\end{equation}
	holds for all $\gamma>0$, $\sigma^s_t>0$ and $\mathbb{E}_t\left(\tau^2\right)>0$, the expression in the lemma gives the unique positive real root.
\end{proof}

\begin{proof}[Proof of Lemma \ref{lemma:trans_cost}.]

The Taylor representation of $U_\gamma(\tilde{r})$ yields for $\rho^* :=\log\left(\frac{1 + \rho^{b, A}(q)}{1 - \rho^{s, B}(q)}\right)$: 
	\begin{align}
		\mathbb{E}_t\left(U_\gamma(\tilde{r})\right) &= \delta_t^{b, s} + \mathbb{E}_t(\tau)\mu^s_t - \rho^*   \nonumber\\
		&\qquad+\sum\limits_{k=2}^\infty\frac{U_\gamma^{(k)}\left(\delta_t^{b, s} + \mathbb{E}_t(\tau)\mu^s_t - \rho^* \right) }{k!U'_\gamma\left(\delta_t^{b, s} + \mathbb{E}_t(\tau)\mu^s_t - \rho^* \right)}\mathbb{E}_t\left(\left(r_{(t:t+\tau)}^{b, s}  - \rho^* - \delta_t^{b, s} - \mathbb{E}_t(\tau)\mu^s_t\right)^k\right).
	\end{align}
	Let $d_t^s$ be the arbitrage bound (in absence of transaction costs) as defined in Equation~\eqref{definition:ce_root}. Then, $d_t^s +\ln\left(\frac{1 + \rho_t^{b, A}(q)}{1 - \rho_t^{s, B}(q)}\right)$ is a root of the function
	\begin{align}
		\tilde F(d) :=& d + \mathbb{E}_t(\tau)\mu^s_t - \rho^*  \nonumber\\
		&\qquad+\sum\limits_{k=2}^\infty\frac{U^{(k)}_\gamma\left(d+ \mathbb{E}_t(\tau)\mu^s_t- \rho^* \right) }{k!U'_\gamma\left(d+ \mathbb{E}_t(\tau)\mu^s_t -\rho^* \right)}\mathbb{E}_t\left(\left(r_{(t:t+\tau)}^{b, s} - \rho^* - d - \mathbb{E}_t(\tau)\mu^s_t\right)^k\right).
	\end{align} 
	Therefore, $\mathbb{E}_t\left(U_\gamma(\tilde{r})\right)$ is positive if and only if 
	\begin{equation}
		\delta_t^{b,s} > d_t^s +\ln\left(\frac{1 + \rho_t^{b, A}(q)}{1 - \rho_t^{s, B}(q)}\right).
	\end{equation}
\end{proof}

\begin{proof}[Proof of Lemma \ref{lemma:settlement_fees}.]
	The proof directly follows from Lemma \ref{lemma:trans_cost} and Theorem \ref{theorem:general}.
\end{proof}

\begin{proof}[Proof of Lemma \ref{lemma:optimal_q_f}.]
	We cast the arbitrageur's optimization problem in terms of the Lagrangian
	\begin{align}
		\mathcal{L}(q,f;\xi)= &B_t^s(1-\rho^{s,B}(q))q + A_t^b(1+\rho^{b,A}(q+f))(q+f) \nonumber\\
		&\qquad - \xi\left(d_t^s(f) - \delta_t^{b,s}+ \log\left(1 + \rho^{b, A}(q)\right) - \log\left(1 - \rho^{s, B}(q)\right)\right)
	\end{align}
	and observe that the corresponding Karush-Kuhn-Tucker (KKT) conditions imply
	\begin{align}
		q = 0 \quad\vee & \quad B_t^s\left((1-\rho^{s,B}(q))-\rho^{s,B'}(q)q\right)  \nonumber\\
		& \qquad - A_t^b\left((1+\rho^{b,A}(q+f))+\rho^{b,A'}(q+f)(q+f)\right) \nonumber\\
		& \qquad -\xi\left(\frac{\rho^{b,A'}(q+f)}{1 + \rho^{b,A}(q+f)}-\frac{\rho^{s,B'}(q)}{1 + \rho^{s,B}(q)}\right) = 0 \label{eq:kkt1}\\
		f = 0  \quad\vee & \quad - A_t^b\left((1+\rho^{b,A}(q+f))+\rho^{b,A'}(q+f)(q+f)\right) \nonumber \\
		& \qquad - \xi\left(\frac{d}{df}d_t^s(f) + \frac{\rho^{b,A'}(q+f)}{1 + \rho^{b,A}(q+f)}\right) = 0 \label{eq:kkt2}\\
		\xi = 0  \quad\vee & \quad d_t^s(f) - \delta_t^{b,s} \nonumber\\
		&\qquad + \log\left(1 + \rho^{b, A}(q+f)\right) - \log\left(1 - \rho^{s, B}(q)\right) = 0, \label{eq:kkt3}
	\end{align}
	We first consider the case of $\xi = 0$. Conditions \eqref{eq:kkt1} and \eqref{eq:kkt2} now become
	\begin{align}
		q = 0 \quad\vee & \quad B_t^s\left((1-\rho^{s,B}(q))-\rho^{s,B'}(q)q\right) \nonumber \\
		&\qquad - A_t^b\left((1+\rho^{b,A}(q+f))+\rho^{b,A'}(q+f)(q+f)\right) = 0 \\
		f = 0  \quad\vee & \quad - A_t^b\left((1+\rho^{b,A}(q+f))+\rho^{b,A'}(q+f)(q+f)\right) = 0
	\end{align}
	which only holds if 
	\begin{align}
		1+\rho^{b,A}(q+f)=-\rho^{b,A'}(q+f)(q+f).
	\end{align}
	Since $\rho^{b,A'}(q+f)> 0$ by Assumption~\ref{assumption:transaction_costs}, this cannot be the case for any $q>0$ or $f>0$. Also, note that $\xi=q=f=0$ implies a contradiction. Therefore, the constraint \eqref{eq:constraint} cannot be slack at the optimum, and there does not exist a candidate solution for $\xi=0$. 
	
	Next, we turn to the analysis of $\xi > 0$. The simple case of $q=0$ does not deliver any positive returns, and it does not make sense for the arbitrageur to pay any fee $f>0$. If anything, the arbitrageur would prefer not to trade at all, i.e., $q=f=0$. We are left with the two interesting cases of $q>0$. 
	
	For $f=0$, the KKT conditions give the candidate solution $\{q_1, f_1, \xi_1\}$ as solutions to the system of equations
	\begin{align}
		B_t^s\left((1-\rho^{s,B}(q_1))-\rho^{s,B'}(q_1)q_1\right) - A_t^b\left((1+\rho^{b,A}(q_1))+\rho^{b,A'}(q_1)(q_1)\right) \qquad& \nonumber\\
		-\xi_1\left(\frac{\rho^{b,A'}(q_1)}{1 + \rho^{b,A}(q_1)}-\frac{\rho^{s,B'}(q_1)}{1 + \rho^{s,B}(q_1)}\right) &= 0 \\
		d_t^s(f_1) - \delta_t^{b,s}+ \log\left(1 + \rho^{b, A}(q_1)\right) - \log\left(1 - \rho^{s, B}(q_1)\right) &= 0 \\
		f_1 &= 0.
	\end{align}
	For $f>0$, we can get the candidate solution $\{q_2, f_2, \xi_2\}$ as solutions to 
	\begin{align}
		B_t^s\left((1-\rho^{s,B}(q_2))-\rho^{s,B'}(q_2)q_2\right)  \qquad& \nonumber\\
		- A_t^b\left((1+\rho^{b,A}(q_2+f_2))+\rho^{b,A'}(q_2+f)(q_2+f_2)\right)\qquad& \nonumber\\
		-\xi\left(\frac{\rho^{b,A'}(q_2+f_2)}{1 + \rho^{b,A}(q_2+f_2)}-\frac{\rho^{s,B'}(q_2)}{1 + \rho^{s,B}(q_2)}\right) &= 0\label{eq:candidate1}\\
		- A_t^b\left((1+\rho^{b,A}(q_2+f_2))+\rho^{b,A'}(q_2+f_2)(q_2+f_2)\right) \qquad \nonumber \\
		- \xi\left(\frac{d}{df}d_t^s(f_2) + \frac{\rho^{b,A'}(q_2+f_2)}{1 + \rho^{b,A}(q_2+f_2)}\right) &=0 \label{eq:candidate2}\\
		d_t^s(f_2) - \delta_t^{b,s}+ \log\left(1 + \rho^{b, A}(q_2+f_2)\right) - \log\left(1 - \rho^{s, B}(q_2)\right) &= 0.
	\end{align}
	However, combining \eqref{eq:candidate1} and \eqref{eq:candidate1} shows that the solutions are only admissible if
	\begin{align}\label{eq:fee_conditions}
		\xi = \frac{B_t^s\left((1-\rho^{s,B}(q_2))-\rho^{s,B'}(q_2)q_2\right)}{\frac{d}{df}d_t^s(f_2) - \frac{\rho^{s,B'}(q_2)}{1 + \rho^{s,B}(q_2)}}>0.
	\end{align}
	Equation~\eqref{eq:fee_conditions} now provides us with the necessary conditions for a solution to the problem that entails a strictly positive settlement fee. Namely, $q_2>0$, $f_2>0$ $\xi_2>0$ can only be a solution if one of the following two conditions holds
	\begin{itemize}
		\item[(i)] $-\frac{d}{df}d_t^s(f_2) > \frac{\rho^{s,B'}(q_2)}{1 - \rho^{s,B}(q_2)}$ and $1-\rho^{s,B}(q_2)>\rho^{s,B'}(q_2)q_2$
		\item[(ii)] $-\frac{d}{df}d_t^s(f_2) < \frac{\rho^{s,B'}(q_2)}{1 - \rho^{s,B}(q_2)}$ and $1-\rho^{s,B}(q_2)<\rho^{s,B'}(q_2)q_2$.
	\end{itemize}
	However, condition (ii) cannot hold at the maximum since $1-\rho^{s, B}(q_2)<\rho^{s, B'}(q_2)q_2$ means that the trading quantity is such that the marginal price impact exceeds the average price impact. In this case, the arbitrageur would reduce the trading quantity to raise her total return. Consequently, (i) remains the necessary condition for a candidate solution with a positive settlement fee, which completes the proof.
\end{proof}

\subsection{Robustness Tests}\label{sec:appendix_robustness}

\begin{table}[ht]
	\centering
	\caption{Price differences, ratings, currency controls, and fiat stability}\label{tab:robust_regions}
 \resizebox{\columnwidth}{!}{%

\centering
\begin{tabular}{lccccccc}
      \toprule
   \emph{Dependent Variable} & \multicolumn{7}{c}{Price Differences (in \%)} \\ 
   \cmidrule(lr){2-2}\cmidrule(lr){3-5}\cmidrule(lr){6-8}
   Subset                                                    &                & USA           & Europe        & All           & USDT          & USD           & All \\   
                                                       & (1)            & (2)           & (3)           & (4)           & (5)           & (6)           & (7)\\  
   \midrule
   Arbitrage Bound (in \%)                                   & 4.768$^{***}$  & 1.805$^{***}$ & 5.269$^{***}$ & 2.738$^{***}$ & 2.711$^{***}$ & 2.946$^{***}$ & 2.922$^{***}$\\   
                                                             & (16.81)        & (4.326)       & (4.486)       & (5.795)       & (7.278)       & (8.313)       & (9.922)\\   
   %AA Rating                                      & -19.01$^{***}$ &               &               &               &               &               &   \\   
   %                                                          & (-27.63)       &               &               &               &               &               &   \\   
   Arbitrage Bound (in \%) $\times$ AA Rating     & -3.199$^{***}$ &               &               &               &               &               &   \\   
                                                             & (-7.690)       &               &               &               &               &               &   \\   
      \midrule
Controls: Inventory                           & Yes            & Yes           & Yes           & Yes           & Yes           & Yes           & Yes\\  
   Exchange Pair FE                               & Yes            & Yes           & Yes           & Yes           & Yes           & Yes           & Yes\\  
   Observations                                              & 1,089,949      & 139,319       & 78,416        & 217,735       & 175,072       & 512,818       & 687,890\\  
   \bottomrule
   \end{tabular}
}
\subcaption*{\emph{Notes:} This table provides OLS estimates based on a regression of hourly average exchange pair price differences and the main components of price risk due to settlement latency. \emph{Price Difference} is the average hourly exchange pair price difference (in percent).
\emph{AA rating} is a dummy variable that is one if the exchange received a top rating from \emph{cryptocompare} on November 1st, 2019, and zero otherwise.
The regional subsets (Columns (2) - (4)) correspond to all exchange pairs where both exchanges are active within the same region (in percent, only \emph{USA}, only \emph{Europe}, \emph{all} contains all exchange pairs where both exchanges are active within the same region).
\emph{USDT} corresponds to the subset of all exchange pairs where both exchanges denote BTC versus Tether, \emph{USD} corresponds to the subset of exchange pairs where both exchanges denote BTC versus USD. \emph{All} corresponds to the subset of exchange pairs where both exchanges denote BTC against the same fiat currency (USD or USDT). 
\emph{Arbitrage bound} corresponds to the average hourly high-price exchange estimated arbitrage bound. 
Within the regressions, we control for \emph{inventory} (the average number of Bitcoins controlled by all wallets associated with the high-price exchange at hour $t-1$). 
We report $t$-statistics based on standard errors following \cite{Newey.1987} using 24 lags in parentheses. 
$^{***}, ^{**}$, and $^{*}$ indicate statistical significance on the $1\%, 5\%$ and $10\%$ levels (two-tailed), respectively.}
\end{table}
\end{document}